\def\adl@drawiv#1#2#3{%
	\hskip0
	\tabcolsep
	\xleaders#3{#2 0\@tempdimb #1{1}#2 0.5\@tempdimb}%
	#2\z@ plus1fil minus1fil\relax
	\hskip0\tabcolsep}
\newcommand{\cdashlinelr}[1]{%
	\noalign{\vskip\aboverulesep
		\global\let\@dashdrawstore\adl@draw
		\global\let\adl@draw\adl@drawiv}
	\cdashline{#1}
	\noalign{\global\let\adl@draw\@dashdrawstore
		\vskip\belowrulesep}}
\let\mathbb=\mathds
\DeclareMathOperator{\Tr}{Tr}
\DeclareMathOperator{\supp}{supp}
\DeclareMathOperator{\e}{\mathrm{e}}
\newcommand{\be}{{\mathbf e}}
\newcommand{\tr}{\operatorname{Tr}}
\newcommand{\al}{{\alpha}}
\newcommand{\pl}{\hspace{.1cm}}
\newcommand{\norm}[2]{\parallel \! #1 \! \parallel_{#2}}
\def\0{{\mathbf{0}}}
\def\1{{\mathbf{1}}}
\def\2{{\mathbf{2}}}
\def\3{{\mathbf{3}}}
\def\4{{\mathbf{4}}}
\def\5{{\mathbf{5}}}
\def\6{{\mathbf{6}}}
\def\7{{\mathbf{7}}}
\def\8{{\mathbf{8}}}
\def\9{{\mathbf{9}}}
\def\be{\begin{equation}}
\def\ee{\end{equation}}
\def\bea{\begin{eqnarray}}
\def\eea{\end{eqnarray}}
\theoremstyle{plain}
\newtheorem{lemm}{Lemma} 
\newtheorem*{lemm1}{Lemma~\ref{lemm:variational}}
\theoremstyle{definition}
\theoremstyle{remark}
\newtheorem{remark}{Remark}
\newcommand{\opnorm}{\@ifstar\@opnorms\@opnorm}
\newcommand{\@opnorms}[1]{%
	$\left|\mkern-1.5mu\left|\mkern-1.5mu\left|
	#1
	\right|\mkern-1.5mu\right|\mkern-1.5mu\right|$
}
\newcommand{\@opnorm}[2][]{%
	\mathopen{#1|\mkern-1.5mu#1|\mkern-1.5mu#1|}
	#2
	\mathclose{#1|\mkern-1.5mu#1|\mkern-1.5mu#1|}
}
\begin{document}

\let\origmaketitle\maketitle
\def\maketitle{
	\begingroup
	\def\uppercasenonmath##1{} 
	\let\MakeUppercase\relax 
	\origmaketitle
	\endgroup
}
\setlength\parindent{+4ex}
\title{\bfseries \Large{
On Strong Converse Theorems for Quantum Hypothesis Testing \\and Channel Coding
}}

\author{ \normalsize {Hao-Chung Cheng$^{1\text{--}5}$ and Li Gao$^6$}}
\address{\small  	
	$^1$Department of Electrical Engineering and Graduate Institute of Communication Engineering,\\ National Taiwan University, Taipei 106, Taiwan (R.O.C.)\\
	$^2$Department of Mathematics, National Taiwan University\\
	$^3$Center for Quantum Science and Engineering,  National Taiwan University\\
	$^4$Physics Division, National Center for Theoretical Sciences, Taipei 10617, Taiwan (R.O.C.)\\
	$^5$Hon Hai (Foxconn) Quantum Computing Center, New Taipei City 236, Taiwan (R.O.C.)\\
	$^6$School of Mathematics and Statistics, Wuhan University, Hubei Province 430072, P.~R.~China
}

\email{\href{mailto:haochung.ch@gmail.com}{haochung.ch@gmail.com}}
\email{\href{mailto:gaolimath@gmail.com}{gaolimath@gmail.com}}

\date{\today}

\begin{abstract}
Strong converse theorems refer to the study of impossibility results in information theory.
In particular, Mosonyi and Ogawa established a one-shot strong converse bound for quantum hypothesis testing [\href{https://doi.org/10.1007/s00220-014-2248-x}{\textit{Comm.~Math.~Phys}, 334(3), 2014}], which servers as a primitive tool for establishing a variety of tight strong converse theorems in quantum information theory.
In this paper, we demonstrate an alternative one-line proof for this bound via the variational expression of measured R\'enyi divergences [\href{https://doi.org/10.1007/s11005-017-0990-7}{\textit{Lett.~Math.~Phys}, 107(12), 2017}].
Then, we show that the variational expression is a direct consequence of H\"older's inequality.
\end{abstract}

\maketitle

\section{Introduction} \label{sec:intro}


\emph{Information Theory} aims to study the fundamental limits of an information-processing system and to design coding strategies to achieve the limit.
If the system is operated at certain coding rates for which the incurred error of codes $\varepsilon_n$ eventually vanishes as we increase the code length $n$, such rate is called \emph{achievable}.
The boundary of all achievable rates then corresponds to the (first-order) limit of the achievable coding rate, which is called \emph{capacity} of the channel in the contexts of channel coding.
On the other hand, \emph{converse analysis} is studying resource optimality of coding theorems.
For example, the infeasibility situation where $\liminf_{n\to +\infty}\varepsilon_n >0$ for coding rate being outside the achievable rate region is called \emph{weak converse}.
In general, there are different levels of notions of converse theorems:
\begin{itemize}
	\item
	\emph{Strong converse property}.
	The fundamental limit of achievable coding rate satisfies the strong converse property if the coding error $\varepsilon_n$ is not only bounded away from zero but converging to $1$ whenever the coding rate is operated outside the closure of the achievable rate region \cite{Sha48, Wol57}.
	
	\item
	\emph{Exponential strong converse}.
	It is called exponential strong converse if further $\varepsilon_n \to 1$ exponentially fast.
	
	\item
	\emph{Strong converse exponent}.
	To characterize how fast the coding error approaches $1$, we then call
	the largest exponential decay rate of $(1-\varepsilon_n)$ as the strong converse exponent.
	In other words, the strong converse exponent corresponds to the (first-order) limit of $- \frac1n \log (1-\varepsilon_n)$.
	
	\item
	\emph{Refined strong converse}.
	If the strong converse exponent is derived and the polynomial prefactor of the success probability is obtained as well, then we term such bounds as refined strong converses.
	This terminology was coined in Ref.~\cite{CN20}, in accordance with the refined sphere-packing bound analysis for rates within the achievable rate region	
	\cite{altugW14A,CHT19,nakiboglu19-ISIT,nakiboglu20F, AW21, Cheng2021a}.
\end{itemize}
We refer the readers to Section~\ref{sec:conclusions} for a brief historical developments of strong converse theorems.

In quantum information theory, a powerful strong converse analysis was proposed by Nagaoka~\cite{nagaoka01}, who identified Ogawa--Nagaoka's approach in quantum hypothesis testing \cite{ON00} and classical-quantum (c-q) channel coding \cite{ON99} as a consequence of the data-processing inequality of the Petz--R\'enyi divergence \cite{Pet86}.
This perspective led to a wealth of tight one-shot converse bounds in quantum information theory \cite{nagaoka01, konigW09, SW12, AMV12, MH11, Sha14, MO14, RW14, WWY14, WW14, HT14, BGW+15, guptaW15, LWD16, CMW16, tomamichelWW17, WTB17, DW18, WBH+20, Hayashi_2017, KW20, HM22, Mos23}.
In particular, Mosonyi and Ogawa employed the data-processing inequality \cite{Bei13, FL13, MDS+13, WWY14, Jencova_I, Hia21} of the sandwiched R\'enyi divergence \cite{MDS+13, WWY14} to establish that, for any test $0 \leq T \leq \mathds{1}$ \cite[Lemma 4.7]{MO14} (see also \cite[Proposition~7.71]{KW20}, \cite[Lemma 2.1]{Mos23}):
\begin{align} \label{eq:MO14}
	\Tr\left[ \rho T \right] \leq \mathrm{e}^{ - \sup_{\alpha > 1} \frac{\alpha-1}{\alpha} \left( -\log \Tr\left[ \sigma T \right] - D_{\alpha}^*(\rho\Vert \sigma ) \right) },
\end{align}
where $\Tr\left[ \rho T \right]$ is the probability of correct decision for quantum state $\rho$ using test $T$;
$\Tr\left[ \sigma T \right]$ is the probability of erroneous decision for quantum state $\sigma$ using test $T$;
and $D_{\alpha}^*(\rho\Vert \sigma)$ is the sandwiched R\'enyi divergence (see \eqref{eq:def:sandwiched} later for a detailed definition).
Mosonyi and Ogawa also proved that the established strong converse exponent is asymptotically tight \cite[Theorem 4.10]{MO14}, \cite[Theorme 4.6]{Mos23}.
The bound \eqref{eq:MO14} servers as a primitive tool to establishing tight strong converse theorems in quantum information theory not merely because this simple proof technique is widely applicable but also because it is a one-shot result and hence it holds for general situations.\footnote{We call an analysis or bound ``\emph{one-shot}'' if the underlying system does not have specific product structures and hence the statement holds for code length $n = 1$.
An asymptotic bound means that the characterization holds in the asymptotic limit $n\to +\infty$.}
By following similar reasoning of proving \eqref{eq:MO14}, the exact strong converse exponents were established for
a kind of composite quantum hypothesis testing \cite{HT14},
c-q channel discrimination \cite{WBH+20},
c-q channel coding \cite{WWY14, MO17}, constant composition coding \cite{MO18}, entanglement-assisted classical communication \cite{guptaW15, LY22},
and classical data compression with quantum side information \cite{Cheng2021a, CHDH22}.

In this paper, instead of using an information-theoretic argument such as data-processing inequality, we show that \eqref{eq:MO14} is a direct consequence of the variational expressions of quantum R\'enyi information measures \cite{FL13, BFT17, Hia21}, which can be proved using H\"older inequality about (noncommutative) $L_p$-norms (see Appendix~\ref{app:Holder}).
As demonstrated below in Theorem~\ref{theo:main}, our proof is amusingly simple and natural as well.
Our approach strikes a close connection between strong converses in quantum information theory and noncommutative functional inequalities (see also \cite{Liu18,BDR18, Cheng2021b, Esposito}).

\section{Main Results} \label{sec:main}

In Section~\ref{sec:Renyi}, we introduce the precise definition of quantum R\'enyi divergences and present their variational expressions.
In Section~\ref{sec:sc}, we show that the one-shot exponential strong converse bound is a direct consequence of the variational expressions (Theorem~\ref{theo:main}).
In Section~\ref{sec:c-q}, we apply Theorem~\ref{theo:main} to obtain a one-shot exponential strong converse bound for randomness-assisted classical-quantum channel coding, and show that shared randomness does not decrease the strong converse exponent.

\subsection{R\'enyi Information Measures and Variational Expressions} \label{sec:Renyi}
For probability mass functions $p$ and $q$, we define the (classical) order-$\alpha$ R\'enyi divergence by
\begin{align}
	D_{\alpha}(p\Vert q)
	&:=
	\begin{dcases}
	\frac{1}{\alpha-1} \log \sum\nolimits_i p(i)^\alpha q(i)^{1-\alpha} & \alpha \in (0,1)\cup (0,\infty)
	\\
	\sum\nolimits_i p(i) \log \frac{p(i)}{q(i)} & \alpha = 1
	\end{dcases}
\end{align}
if $\alpha \in (0,1)$ or $\alpha\geq1$ plus $\textrm{supp}(p) \subseteq \textrm{supp}(q)$; it is defined to be $+\infty$, otherwise \cite{Ren62, EH14}.
The order-$\infty$ R\'enyi divergence is defined via $D_{\infty}(p\Vert q) := \lim_{\alpha\to \infty} D_{\alpha}(p \Vert q)$.

For density operators $\rho$ and $\sigma$ on some separable Hilbert space, we define the order-$\alpha$ sandwiched R\'enyi divergence \cite{MDS+13, WWY14} by \cite[\S 5]{Hia21}
\begin{align}
	\label{eq:def:sandwiched}
	D_{\alpha}^*(\rho\Vert \sigma) := \frac{1}{\alpha-1} \log \Tr\left[ \left( \sigma^{\frac{1-\alpha}{2\alpha}} \rho \sigma^{\frac{1-\alpha}{2\alpha}} \right)^{\alpha} \right], \quad \alpha \in (0,1)\cup (0,\infty)
\end{align}
if $\alpha \in (0,1)$ or $\alpha>1$ plus $\textrm{supp}(\rho) \subseteq \textrm{supp}(\sigma)$; it is defined to be $+\infty$, otherwise.
The order-$1$ sandwiched R\'enyi divergences recovers Umegaki's relative entropy \cite{Ume54}
$\lim_{\alpha\to 1} D_{\alpha}^*(\rho\Vert \sigma )
 = \Tr\left[ \rho (\log \rho - \log \sigma)\right] \equiv D(\rho\Vert\sigma)$ , and order-$\infty$  is defined as $D_{\infty}^*(\rho\Vert \sigma) := \lim_{\alpha\to + \infty} D_{\alpha}^*(\rho \Vert \sigma)=\log \inf\{\lambda : \rho\le \lambda \sigma\}$.

A collection of positive semi-definite operators $\left(\Pi(i)\right)_i$ satisfying completeness relation, i.e., $\sum_i \Pi(i) = \mathds{1}$, is called positive operator-valued measure (POVM).
If $\{\Pi(i)\}_i$ are projection operators, then $\Pi$ is called projection-valued measure (PVM).
We define the projectively-measured R\'enyi divergence $D_{\alpha}^{\mathds{P}}$ and measured R\'enyi divergence $D_{\alpha}^{\mathds{M}}$ by
\begin{align}
	D_{\alpha}^{\mathds{P}}\left( \rho \Vert \sigma \right) &:= \sup \left\{ D_{\alpha}\left( p_{\rho,\Pi} \Vert p_{\sigma,\Pi} \right) : \Pi \textrm{ a PVM } \right\};
	\\
	D_{\alpha}^{\mathds{M}}\left( \rho \Vert \sigma \right) &:= \sup \left\{ D_{\alpha}\left( p_{\rho,\Pi} \Vert p_{\sigma,\Pi} \right) : \Pi \textrm{ a POVM } \right\},
\end{align}
where
\begin{align}
	p_{\rho,\Pi} = \left( p_{\rho,\Pi}(i) \right)_{i},
	\quad p_{\rho,\Pi}(i) := \Tr\left[ \rho \Pi(i) \right]
\end{align}
is the induced probability mass function for density operator $\rho$ under measurement $\Pi$.
It has been shown that $D_{\alpha}^{\mathds{P}}\left( \rho \Vert \sigma \right) = D_{\alpha}^{\mathds{M}}\left( \rho \Vert \sigma \right)$ in the finite-dimensional case \cite[Theorem 4]{BFT17} and in the infinite-dimensional case \cite[Theorem 5.8]{Hia21}.

We recall the following variational expressions, which are the main technical ingredient of the paper.
\begin{lemm}[Variational expressions {\cite{FL13, BFT17, Hia21, Mos23}}] \label{lemm:variational}
	Let $\rho$ and $\sigma$ be two density operators satisfying $\supp{(\rho)} \subseteq \supp{(\sigma)}$.
	For any order $\alpha\in (0,1)\cup (1,+\infty]$, the following identities hold:
	\begin{align}
		\label{eq:variational_measured}
		D_\alpha^{\mathbb{M}} (\rho \Vert \sigma ) &=
		\sup_{ 0 < T \leq \mathds{1} } \left\{ \frac{\alpha}{\alpha-1} \log \Tr\left[\rho T\right]
		- \log \Tr\left[ \sigma  T^{\frac{\alpha}{\alpha-1} } \right] \right\};
		\\
		\label{eq:variationa_sandwiched}
		D_\alpha^*(\rho \Vert \sigma )
		&=
		\sup_{ 0 < T \leq \mathds{1} } \left\{ \frac{\alpha}{\alpha-1} \log \Tr\left[\rho T\right]
		- \log \Tr\left[ \left(  T^{\frac12} \sigma^{\frac{\alpha-1}{\alpha}}  T^{\frac12} \right)^{\frac{\alpha}{\alpha-1}} \right] \right\}.
	\end{align}
	For $\alpha>1$, the supremum can be relaxed to $0\leq T\leq \mathds{1}$
	such that
	$T$ is not orthogonal to $\rho$.
\end{lemm}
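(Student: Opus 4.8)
The plan is to deduce both identities from the sharp (reverse) Hölder inequality for (noncommutative) $L_p$-norms established in Appendix~\ref{app:Holder}, first in the commutative case and then by a substitution argument. Write $\beta:=\frac{\alpha}{\alpha-1}$ for the Hölder conjugate of $\alpha$, so $\beta>1$ when $\alpha>1$ and $\beta<0$ when $\alpha\in(0,1)$, and observe at the outset that each displayed objective is invariant under $T\mapsto\lambda T$ for $\lambda>0$; hence the constraint $T\leq\mathds{1}$ plays no essential role and $T$ may be normalised at will.

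I would begin with the commutative specialisation of \eqref{eq:variational_measured}, namely the classical identity $D_\alpha(p\Vert q)=\sup_{0<t\leq 1}\{\beta\log\sum_i p_it_i-\log\sum_i q_it_i^{\beta}\}$ for probability vectors with $\supp(p)\subseteq\supp(q)$. Putting $Z:=\sum_i p_i^{\alpha}q_i^{1-\alpha}$ and factoring $p_it_i=(q_it_i^{\beta})^{1/\beta}(p_i^{\alpha}q_i^{1-\alpha})^{1/\alpha}$, for $\alpha>1$ Hölder's inequality with exponents $(\beta,\alpha)$ gives $\sum_i p_it_i\leq(\sum_i q_it_i^{\beta})^{1/\beta}Z^{1/\alpha}$, which upon taking logarithms is precisely the bound $\leq D_\alpha(p\Vert q)$, with equality at $t_i\propto(p_i/q_i)^{\alpha-1}$. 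For $\alpha\in(0,1)$ the same bookkeeping applies verbatim with Hölder replaced by the reverse Hölder inequality (legitimate since $q_i>0$ on $\supp(q)$): the inequality reverses, and so does its consequence after dividing through by $\alpha-1<0$, while the extremiser is unchanged. The endpoint $\alpha=\infty$ follows by letting $\alpha\to\infty$.

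To obtain \eqref{eq:variational_measured} I would use $D_\alpha^{\mathbb{M}}(\rho\Vert\sigma)=D_\alpha^{\mathbb{P}}(\rho\Vert\sigma)$ \cite{BFT17,Hia21} to restrict the supremum to PVMs: given a PVM $\Pi$ and a classical test $0<t\leq 1$ on its outcomes, set $T:=\sum_i t_i\Pi(i)$, so $0<T\leq\mathds{1}$, $\Tr[\rho T]=\sum_i t_i\,p_{\rho,\Pi}(i)$, and---because $\Pi$ is projection-valued---$T^{\beta}=\sum_i t_i^{\beta}\Pi(i)$, whence $\Tr[\sigma T^{\beta}]=\sum_i t_i^{\beta}\,p_{\sigma,\Pi}(i)$; conversely every $0<T\leq\mathds{1}$ arises this way from its spectral resolution, so taking the supremum of the classical identity over all $(\Pi,t)$ yields exactly the right-hand side of \eqref{eq:variational_measured}. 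For \eqref{eq:variationa_sandwiched}, after normalising so that $\Tr[(T^{1/2}\sigma^{\frac{\alpha-1}{\alpha}}T^{1/2})^{\beta}]=1$, I would substitute $S:=\sigma^{\frac{\alpha-1}{2\alpha}}T\,\sigma^{\frac{\alpha-1}{2\alpha}}$, a bijection of the positive operators on $\supp(\sigma)$; then $\Tr[\rho T]=\Tr[(\sigma^{\frac{1-\alpha}{2\alpha}}\rho\,\sigma^{\frac{1-\alpha}{2\alpha}})S]$ by cyclicity, while the trace identity $\Tr[(A^{1/2}BA^{1/2})^{r}]=\Tr[(B^{1/2}AB^{1/2})^{r}]$ for positive $A,B$ turns the constraint into $\Tr[S^{\beta}]=1$. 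Writing $X:=\sigma^{\frac{1-\alpha}{2\alpha}}\rho\,\sigma^{\frac{1-\alpha}{2\alpha}}$, the task reduces to extremising $\Tr[XS]$ over the unit $L_\beta$-sphere, which by the sharp Hölder inequality equals $\|X\|_\alpha$---a supremum when $\alpha>1$ and, because also $\beta<0$, an infimum when $\alpha\in(0,1)$---so the value is $\beta\log\|X\|_\alpha=\tfrac{1}{\alpha-1}\log\Tr[X^{\alpha}]=D_\alpha^*(\rho\Vert\sigma)$.

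For the relaxation when $\alpha>1$: since $\beta>0$, the maps $T\mapsto\Tr[\sigma T^{\beta}]$ and $T\mapsto\Tr[(T^{1/2}\sigma^{\frac{\alpha-1}{\alpha}}T^{1/2})^{\beta}]$ extend continuously to $\{0\leq T\leq\mathds{1}\}$, so approximating a rank-deficient $T$ with $\Tr[\rho T]>0$ by $(1-\varepsilon)T+\varepsilon\mathds{1}$ and sending $\varepsilon\downarrow 0$ shows the relaxed supremum coincides with the original. I expect the chief obstacle to be the regime $\alpha\in(0,1)$, where the conjugate $\beta$ is negative: one must invoke the \emph{reverse} Hölder inequality, which requires the reference operator to be faithful (forcing the analysis onto $\supp(\sigma)$) and which flips the operative extremum from a supremum to an infimum---a flip that has to be reconciled exactly against the sign of $1/(\alpha-1)$. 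A secondary difficulty is the separable infinite-dimensional setting, where the substitution $S=\sigma^{\frac{\alpha-1}{2\alpha}}T\sigma^{\frac{\alpha-1}{2\alpha}}$, the trace identity, and the sharpness of Hölder's inequality all involve the possibly unbounded operators $\sigma^{\pm\frac{\alpha-1}{2\alpha}}$; here I would either carry out the argument in finite dimensions and appeal to \cite{Hia21} for the extension, or work throughout in the noncommutative $L_p$-space framework underlying Appendix~\ref{app:Holder}.
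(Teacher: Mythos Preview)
Your proposal is correct and follows essentially the same route as the paper: both arguments reduce the two variational formulae to the sharp (reverse) H\"older inequality, treat the measured case by passing to the spectral PVM of $T$ and applying commutative H\"older to the induced distributions, and handle $\alpha\in(0,1)$ via the reverse inequality together with the sign flip in $\beta$. The only cosmetic difference is that the paper packages the sandwiched case in the weighted-$L_p$/KMS formalism---writing $\Tr[\rho T]=\langle \rho/\sigma,\,T\rangle_\sigma\le \|\rho/\sigma\|_{\alpha,\sigma}\|T\|_{\alpha',\sigma}$ and interpreting $\rho/\sigma$ as a noncommutative likelihood ratio---whereas your substitution $S=\sigma^{\frac{\alpha-1}{2\alpha}}T\,\sigma^{\frac{\alpha-1}{2\alpha}}$ unwinds exactly the same computation into standard Schatten norms; the infinite-dimensional caveats you flag (unbounded $\sigma^{\pm\frac{\alpha-1}{2\alpha}}$, approximation of $T$ by finite-spectrum operators) are precisely those the paper resolves by density in $L_{\alpha'}(\mathcal{B(H)},\sigma)$ and finite-dimensional approximation.
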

\begin{remark}
	Both the objective functions on the right-hand sides are homogeneous in $T$.
\end{remark}

The variational expression for the sandwiched quantity $D_\alpha^*$ in \eqref{eq:variationa_sandwiched} was first proved by Frank and Lieb \cite[Lemma 4]{FL13} in the finite-dimensional case, and was extended to the infinite-dimensional case by Jen{\v{c}}ov{\'a} \cite[Proposition 3.4]{Jencova_II} and Hiai \cite[Lemma 3.19]{Hia21}.
Mosonyi established the variational expression of the $(\alpha,z)$-R\'enyi divergence \cite{AD15} in the infinite-dimensional case, which also covers the sandwiched quantity \cite[Lemma 3.23]{Mos23}.
The variational expression for the measured quantity $D_\alpha^{\mathbb{M}}$ in \eqref{eq:variational_measured} was first proved in the finite-dimensional case by Berta, Fawzi, and Tomamichel in \cite[Lemma 3]{BFT17}, and generalized to the infinite-dimensional case by Hiai \cite[(5.17), (5.18)]{Hia21}.
Note that in the above references, the upper bound $T\leq \mathds{1}$ in the optimization set was not mentioned.

The inequality $D_\alpha^{\mathbb{M}} (\rho \Vert \sigma )\leq D_\alpha^*(\rho \Vert \sigma ) $ for all $\alpha\geq \frac12$ can be seen from the data-processing inequality of the sandwiched R\'enyi divergence \cite{Bei13, FL13, MDS+13, Jencova_I, Hia21, Mos23}.
Moreover, the variational expressions directly show that the inequality is strict for $\alpha \in (\sfrac12, \infty)$ unless $\rho \sigma = \sigma\rho$ or both the quantities are infinite via the Araki--Lieb--Thirring inequality \cite{LT76, Ara90, Hia94}.

For completeness, we will provide an alternative proof of the variational expressions (for separable Hilbert spaces) in Appendix~\ref{app:Holder}. They are both direct consequences of H\"older's inequality.

\begin{remark}
	For $\alpha=1$, Petz has proved the following variational expressions \cite{Don86, Pet88, Pet08} (see also \cite{BFT17}):
	\begin{align}
		D_1^{\mathds{M}}(\rho\Vert \sigma)
		&= \sup_{0<T\leq \mathds{1}} \left\{ \Tr\left[ \rho \log T \right] - \log \Tr\left[ \sigma T \right]
		\right\};
		\\
		D_1^*(\rho\Vert \sigma)
		= D(\rho\Vert \sigma)
		&= \sup_{0<T\leq \mathds{1}} \left\{ \Tr\left[ \rho \log T \right] - \log \Tr\left[ \exp\{ \log \sigma + \log T  \} \right]
		\right\}.
	\end{align}
\end{remark}

\subsection{Strong Converses via Variational Expressions} \label{sec:sc}

Consider testing two hypotheses of quantum states $\rho$ and $\sigma$ (both are mathematically described by some density operators on a separable Hilbert space).
Given a test $0\leq T\leq \mathds{1}$ for deciding the null hypothesis $\rho$, the
probability of correct decision under the true hypothesis $\rho$ is
$\Tr\left[\rho T \right]$ (which is also called the Type-I success probability), and the probability of erroneous decision under the alternative hypothesis $\Tr\left[ \sigma T \right]$ (which is also called the type-II error probability).
Apparently, if one desires the type-II error $\Tr\left[\sigma T \right]$ to be small, then the type-I success probability $\Tr\left[\rho T \right]$ cannot be too large.
The strong converse exponent analysis is hence to find the optimal trade-off between these two quantities.

\begin{theo}[One-shot exponential strong converse] \label{theo:main}
	Let $\rho$ and $\sigma$ be density operators satisfying $D_1^{\mathds{M}}(\rho\Vert\sigma)<+\infty$.
	Then, for any test $0\leq T \leq \mathds{1}$ non-orthogonal to $\rho$, 
	\begin{align}
		\Tr\left[ \rho  T \right]
		\leq \e^{ -  \frac{\alpha-1}{\alpha} \left( -\log \Tr\left[ \sigma  T \right] - D_\alpha^{\mathds{M}}(\rho\Vert \sigma) \right) },
		\quad \forall\, \alpha\geq 1.
	\end{align}
\end{theo}
\begin{proof}
	We only consider $\alpha>1$ with $D_{\alpha}^{\mathds{M}}(\rho\Vert\sigma)<+\infty$ (if there is any) because otherwise the bound holds trivially.
	Then, we must have $\textrm{supp}(\rho)\subseteq \textrm{supp}(\sigma)$.
	The variational formula, \eqref{eq:variational_measured} of Lemma~\ref{lemm:variational}, implies
	\begin{align}
		D_\alpha^{\mathbb{M}} (\rho \Vert \sigma )
		&= \sup_{0\leq T\leq \mathds{1} } \left\{ \frac{\alpha}{\alpha-1} \log \Tr\left[\rho T\right]
		- \log \Tr\left[ \sigma  T^{\frac{\alpha}{\alpha-1} } \right] \right\}
		\\
		&\geq \sup_{0\leq T\leq \mathds{1} } \left\{ \frac{\alpha}{\alpha-1} \log \Tr\left[\rho T\right]
		- \log \Tr\left[ \sigma  T  \right] \right\},
	\end{align}
	where the inequality is because $\frac{\alpha}{\alpha-1}\geq1$ and hence $T^{\frac{\alpha}{\alpha-1} } \leq T \leq \mathds{1}$.
	This concludes the proof.
\end{proof}

One can further relax the right-hand side to obtain an additive exponent, i.e.~$D_\alpha^{\mathds{M}}(\rho\Vert \sigma) \leq D_\alpha^{*}(\rho\Vert \sigma)  $
\cite{Bei13, FL13, MDS+13, BFT17, Jencova_I, Hia21, Mos23}.
This then yields a lower bound to the strong converse exponent for quantum hypothesis testing for $n$-fold product states: 
given any integer $n$,
\begin{align}
	\sup_{0\leq T^n \leq \mathds{1}^{\otimes n}
	}
	\left\{
	- \frac{1}{n} \log \Tr\left[ \rho^{\otimes n} T^n  \right]
	: -\frac{1}{n}\log  \Tr\left[ \sigma^{\otimes n} T^n  \right] \geq r
	\right\}
	&\geq
	\sup_{\alpha\geq1} \frac{\alpha-1}{\alpha} \left( r -  D_{\alpha}^{*}(\rho\Vert \sigma)    \right),
\end{align}
where the exponent is positive if and only if $r> \lim_{\alpha \searrow1 } D_{\alpha}^{*}(\rho\Vert \sigma) = D(\rho\Vert \sigma)$.
The reverse inequality (i.e.~upper bound) was shown in the asymptotic limit $n\to +\infty$ by Mosonyi and Ogawa \cite[Theorem 4.10]{MO14}, \cite[Theorem 4.6]{Mos23}.

On the other hand, one can also directly evaluate the regularized measured R\'enyi divergence in Theorem~\ref{theo:main} in the asymptotic scenario.
Indeed, it was shown that \cite{HP91, HT14}, \cite[Corollary 4.6]{MO14}, \cite[(4.34)]{hiai2017different},
\begin{align}
	\lim_{n\to \infty} \frac{1}{n} D_\alpha^{\mathds{M}}\left( \rho^{\otimes n} \Vert \sigma^{\otimes n}\right)
	= D_{\alpha}^*(\rho\Vert \sigma),
	\quad \alpha\geq \frac12
\end{align}
for finite-dimensional Hilbert spaces,
and it was later generalized to the infinite-dimensional Hilbert space by Mosonyi \cite[Theorem 3.45]{Mos23} and semi-finite von Neumann algebra in \cite{fawzi2022asymptotic}.
Nevertheless, from the proof of Theorem~\ref{theo:main}, it seems that the measured R\'enyi divergence $D_{\alpha}^{\mathds{M}}$ is a more natural and tight quantity in the one-shot bound.

\subsection{Randomness-Assisted Classical-Quantum Channel Coding} \label{sec:c-q}

Consider a classical-quantum (c-q) channel $\mathscr{N}_{\mathsf{X} \to \mathsf{B}} : x\mapsto \rho_{\mathsf{B}}^x$, which maps each letter $x$ in a finite alphabet $\mathsf{X}$ to a density operator $\rho_{\mathsf{B}}^x$ on some output Hilbert space $\mathcal{H}_{\mathsf{B}}$.
The goal of a c-q channel coding is to send equiprobable messages $m\in \mathsf{M}:= \{1,2,\ldots, |\mathsf{M}|\}$ over the channel using proper designed encoding and decoding.
As we are considering the strong converse regime, we further grant \emph{unlimited public randomness} shared between encoder at Alice and decoder at Bob.
We call this \emph{randomness-assisted classical communication over $\mathscr{N}_{\mathsf{X} \to \mathsf{B}}$} (see e.g.~\cite[\S 3.2]{Wil17b}), and shortly we will show impossibility result even with assistance of shared randomness.
A randomness-assisted code is described as follows.
\begin{itemize}
	\item \emph{Preparation.} Before communication commences, Alice and Bob prepare shared randomness with a common probability distribution $p_{\mathsf{X}}$ on the input alphabet $\mathsf{X}$.
	
	\item \emph{Encoding}.
	For each message $m\in \mathsf{M}$, Alice chooses a codeword $x(m) \in \mathsf{X}$ via
	an encoder $\mathscr{E}:\mathsf{M}\to \mathsf{X}$, and sends it to the channel $\mathscr{N}_{\mathsf{X} \to \mathsf{B}}$.
	
	\item \emph{Decoding.}
	Via the shared randomness, Bob chooses a positive operator-valued measure (POVM), i.e.~$\left(T_{\mathsf{B}}^{x(m)} \right)_{m\in\mathsf{M}}$ with $T_{\mathsf{B}}^{x(m)}\geq 0$ and $\sum_{m\in\mathsf{M}} T_{\mathsf{B}}^{x(m)} = \mathds{1}_{\mathsf{B}}$ as a decoder $\mathscr{D}$, to decode the sent message $m$.
\end{itemize}
The largest probability of successful decoding (with shared randomness $p_{\mathsf{X}}$) is hence given by
\begin{align}
	P(\mathsf{X}:\mathsf{B})_{\rho}
	:= \sup_{ (\mathscr{E}, \mathscr{D})  }
	\frac{1}{|\mathsf{M}|} \sum_{m\in\mathsf{M}} \mathds{E}_{x(m)\sim p_{\mathsf{X}}} \Tr\left[ \rho_{\mathsf{B}}^{x(m)} T_{\mathsf{B}}^{x(m)} \right].
\end{align}
By applying Theorem~\ref{theo:main}, we have the following bound.

\begin{prop}[One-shot exponential strong converse for classical-quantum channel coding] \label{prop:c-q}
	Consider any classical-quantum channel $\mathscr{N}_{\mathsf{X} \to \mathsf{B}}: x\mapsto \rho_{\mathsf{B}}^x$.
	Any randomness-assisted codes with common distribution $p_{\mathsf{X}}$ for sending equiprobable $|\mathsf{M}|$ messages over $\mathscr{N}_{\mathsf{X} \to \mathsf{B}}$ satisfies
	\begin{align}
		P(\mathsf{X}:\mathsf{B})_{\rho}
		\leq \mathrm{e}^{ - \frac{\alpha-1}{\alpha} \left( \log |\mathsf{M}| -  I_{\alpha}^{\mathds{M}}( \mathsf{X} : \mathsf{B} )_{\rho}   \right) },
		\quad \forall\, \alpha \geq1,
	\end{align}
	where $\rho_{\mathsf{XB}} := \mathds{E}_{x\sim p_{\mathsf{X}}} |x\rangle\langle x|_{\mathsf{X}} \otimes \rho_{\mathsf{B}}^{x}$,
	and $I_{\alpha}^{\mathds{M}}( \mathsf{X} : \mathsf{B} )_{\rho} := \inf_{\sigma_{\mathsf{B}}} D_{\alpha}^{\mathds{M}}(\rho_{\mathsf{XB}} \Vert \rho_{\mathsf{X}} \otimes \sigma_{\mathsf{B}} )$ is the measured R\'enyi information (with minimization over density operators $\sigma_{\mathsf{B}}$ on Hilbert space $\mathcal{H}_{\mathsf{B}}$).
\end{prop}

Noting that the established bound holds for any distribution $p_{\mathsf{X}}$, one may upper bound $I_{\alpha}^{\mathds{M}}( \mathsf{X} : \mathsf{B} )_{\rho}$ on the  right-hand side to obtain the measured R\'enyi capacity and further relax it to the sandwiched R\'enyi capacity (see e.g.~\cite[\S 4]{MO17}, \cite{HT14}, \cite[Proposition 4]{CGH18}):
\begin{align}
	\sup_{p_{\mathsf{X}}} I_{\alpha}^{\mathds{M}}( \mathsf{X} : \mathsf{B} )_{\rho}
	=: C_{\alpha}^{\mathds{M}}(\mathscr{N}_{\mathsf{X} \to \mathsf{B}})
	\leq C_{\alpha}^{*}(\mathscr{N}_{\mathsf{X} \to \mathsf{B}})
	:= \sup_{p_{\mathsf{X}}} \inf_{\sigma_{\mathsf{B}}} D_{\alpha}^{*}(\rho_{\mathsf{XB}} \Vert \rho_{\mathsf{X}} \otimes \sigma_{\mathsf{B}} )
	\leq \log |\mathsf{X}|.
\end{align}
In the setting of $n$-fold product channels  $\mathscr{N}_{\mathsf{X} \to \mathsf{B}}^{\otimes n}$, invoking the additivity, i.e.~$C_{\alpha}^{*}(\mathscr{N}_{\mathsf{X} \to \mathsf{B}}^{\otimes n}) = n C_{\alpha}^{*}(\mathscr{N}_{\mathsf{X} \to \mathsf{B}})$ \cite[Lemma 6]{guptaW15}, \cite{Bei13, DJK+06, WWY14}, we hence obtain a lower bound to the strong converse exponent for randomness-assisted coding over $\mathscr{N}_{\mathsf{X} \to \mathsf{B}}^{\otimes n}$ with rate $R:= \frac1n \log |\mathsf{M}|$ (for any integer $n$) by
\begin{align} \label{eq:sc_exponent_c-q}
	\inf_{p_{\mathsf{X}^n}} -\frac1n \log P(\mathsf{X}^n:\mathsf{B}^n)_{\rho^n} \geq 
	\sup_{\alpha\geq 1} \frac{\alpha-1}{\alpha} \left( R -  C_{\alpha}^{*}(\mathscr{N}_{\mathsf{X} \to \mathsf{B}})    \right).
\end{align}
This means that the probability of successful decoding, $P(X^n\!:\!B^n)_{\rho^n}$, with any shared randomness $p_{\mathsf{X}^n}$ vanishes for any blocklength $n$ with an exponent at least the right-hand side of \eqref{eq:sc_exponent_c-q}. (Here, the joint state $\rho_{\mathsf{X}^n \mathsf{B}^n}^n$ is induced from $p_{\mathsf{X}^n}$ and $\mathscr{N}_{\mathsf{X} \to \mathsf{B}}^{\otimes n}$.)
Since \eqref{eq:sc_exponent_c-q} can be asymptotically (as $n\to +\infty$) achieved by codes without randomness assistance \cite[Theorem 5.14]{MO17}, the above strong converse exponent is exact, and we conclude that shared randomness does not \emph{decrease} the strong converse exponent for communication over $\mathscr{N}_{\mathsf{X} \to \mathsf{B}}$.
Namely, shared randomness does not help in the strong converse rate region $R> C_1^*(\mathscr{N}_{\mathsf{X} \to \mathsf{B}})$.\footnote{It is well-known that shared randomness does not increase the channel capacity, $C_1^*(\mathscr{N}_{\mathsf{X} \to \mathsf{B}})$, of a classical-quantum channel $\mathscr{N}_{\mathsf{X} \to \mathsf{B}}$.
We then further show that shared randomness does not help to slow the exponential decay rate of the probability of successful decoding.
}

We shall emphasize that Proposition~\ref{prop:c-q} can be proved by using the standard argument of the data-processing inequality as shown in \cite[\S 4]{WWY14}, \cite[Lemma 5.5]{MO17} (see also \cite[Lemma 31]{nakiboglu19B}).
Below we show that Proposition~\ref{prop:c-q} is also a direct consequence of Theorem~\ref{theo:main} without resorting to data-processing inequalities.

\begin{proof}[Proof of Proposition~\ref{prop:c-q}]
	For any codes, we define the following
	\begin{align}
		\omega_{\mathsf{MXB}}
		&:= \frac{1}{|\mathsf{M}|} \sum_{m\in\mathsf{M}} |m\rangle\langle m|_{\mathsf{M}}
		\otimes \sum_{x(m)\in\mathsf{X}} p_{\mathsf{X}}(x(m)) |x(m)\rangle \langle x(m)|_{\mathsf{X}} \otimes \rho_{\mathsf{B}}^{x(m)};
		\\
		T_{\mathsf{MXB}}
		&:= \sum_{m\in\mathsf{M}} |m\rangle\langle m|_{\mathsf{M}}
		\otimes \sum_{x(m)\in\mathsf{X}} |x(m)\rangle \langle x(m)|_{\mathsf{X}} \otimes T_{\mathsf{B}}^{x(m)}.
	\end{align}
	By inspection, we obtain for any density operator $\sigma_{\mathsf{B}}$,
	\begin{align}
		\Tr\left[ \omega_{\mathsf{MXB}} T_{\mathsf{MXB}}  \right]
		&= P(X:B)_{\rho};
		\\
		\Tr\left[ \omega_{\mathsf{MX}} \otimes \sigma_{\mathsf{B}} T_{\mathsf{MXB}}  \right]
		&= \frac{1}{|\mathsf{M}|}.
	\end{align}
	
	Now we apply Theorem~\ref{theo:main} with $\rho \leftarrow \omega_{\mathsf{MXB}}$, $\sigma \leftarrow \omega_{\mathsf{MX}} \otimes \sigma_{\mathsf{B}}$
	and $T \leftarrow T_{\mathsf{MXB}} $, and identify that
		\begin{align}
			D_{\alpha}^{\mathds{M}}\left( \omega_{\mathsf{MXB}} \Vert \omega_{\mathsf{MX}} \otimes \sigma_{\mathsf{B}} \right)
			= D_{\alpha}^{\mathds{M}}(\rho_{\mathsf{XB}} \Vert \rho_{\mathsf{X}} \otimes \sigma_{\mathsf{B}} )
		\end{align}
		by the direct-sum structure.
	By minimizing over $\sigma_{\mathsf{B}}$, we conclude the proof.
\end{proof}

\begin{remark}
	Similar reasoning as shown in Proposition~\ref{prop:c-q} directly leads to
	the strong converse exponent for position-based entanglement-assisted classical communication over quantum channels (see the setup in e.g.~\cite[\S IV.B]{Cheng_simple} and \cite{Wil17b, AJW19a}), firstly proved in \cite{guptaW15}, where the tightness was recently proved in \cite{LY22}.
\end{remark}

\section{Historical Remarks and Conclusions} \label{sec:conclusions}

The converse study can be traced back to the origin of Information Theory.
In Shannon's pioneering paper \cite{Sha48}, the strong converse property of channel capacity was mentioned, but the first proof was given by Wolfowitz via a type counting method \cite{Wol57, Wol59, Wol63, Wol64, Wol68} (see also \cite{kemperman69A, kemperman69B, HK89, Kem71, WIK12}, \cite[\S 5.3]{Ahl14}).
Subsequently, strong converse analysis becomes a rich direction of research;
various strong converse theorems have been proved and applied to numerous information-theoretic tasks, e.g.,
a combinatorial approach called the blowing-up lemma \cite{marton1966simple, AD76, AGK76}, \cite[\S 5]{CK11}, \cite[\S 3.6]{RS13},
Csisz{\'{a}}r--K{\"{o}}rner's entropy and image size characterization \cite[\S 15]{CK11},
Augustin's non-asymptotic converse bound \cite[\S 10]{augustin66}, \cite[\S 13]{augustin78},
Dueck's wringing technique \cite{Due81, Ahl82, FT16},
Han's information-spectrum method \cite{verduH94, Han98, Han03, BS00, Hay09b, WH14, TB15} and later refined by Oohama \cite{OH94, Ooh15_IEICE, Ooh15a, Ooh15b, Ooh16, oohama17A, Ooh18, Ooh19, Ooh19b, Ooh20, SO19, OS22},
Polyanskiy--Poor--Verd{\'u}'s meta-converse of reducing converse problems to hypothesis testing \cite{PPV10}, \cite[\S 4]{Tan14}, \cite{KV12, KV13, HTW14, VCF+16},
strong converse via functional inequalities such as Gaussian Poincar{\'e} inequality \cite{FT17},
reverse hypercontractivity \cite{Liu18a, Liu18, LHV17, LHV18}, and Brascamp-Lieb Inequalities \cite{LCV20},
and a change-of-measure argument \cite{GE09, GE11, Wat17, TW18, TW23}.

In particular, Arimoto established a {one-shot strong converse bound} of the form in \eqref{eq:MO14} for discrete classical channels \cite[Theorem 1]{Ari73}.
If the channel is stationary and product, it leads to exponential convergence of $(1-\varepsilon_n)$ for any code length $n$.
Omura \cite{omura75}, and then Dueck and K{\"o}rner \cite{dueckK79} showed that Arimoto's strong converse exponent is tight as $n\to +\infty$.
Later, Arimoto's bound based on Jensen's inequality was formulated into an equivalent result of the data-processing inequality of R\'enyi divergences \cite{Ren62} by Augustin \cite[Theorem 27.2-(ii)]{augustin78} and Sheverdyaev \cite[Lemma 2]{sheverdyaev82}, respectively.
Polyanskiy and Verd{\'u} extended this approach to any generalized divergence satisfying data-processing inequality \cite{PV10a}.
(We refer the readers to a excellent detailed historical discussion by Nakibo{\u{g}}lu in \cite[Appendix B]{nakiboglu19B}.)
Refined strong converse theorems for hypothesis testing and channel coding with explicit polynomial prefactors were established in \cite{Str62, csiszarL71, vazquezFKL18, CN20}.

Extending the ideas of classical strong converse theorems to the quantum scenario is non-trivial and facing several technical challenges due to the noncommutative nature of quantum information.
After a series of efforts, substantial progress has been made and novel techniques have been developed along this line of research.
For quantum hypothesis testing, Ogawa and Nagaoka analyzed the quantum Neyman--Pearson test \cite{Hel67, Hol72}, \cite[p.~120]{Hel76}, \cite[Theorem 3.4]{Wat18} to obtain a one-shot exponential strong converse bound \cite{ON00}, completing the strong converse part of the famous Quantum Stein's lemma by Hiai and Petz \cite{HP91,ON00}, and also providing a lower bound to the {strong converse exponent}.
In addition, Ogawa and Nagaoka combined Arimoto's approach \cite{Ari73} with techniques in matrix analysis to obtain both one-shot and exponential strong converses for c-q channel coding \cite{ON99}, completing the strong converse of the Holevo--Schumacher--Westmoreland (HSW) theorem for c-q channel coding \cite{Hol73b, SW97, Hol98}.
Independently, Winter extended the type-counting method with a gentle measurement lemma \cite[Lemma 9]{Win99b} to obtain strong converses of Wolfowitz's form \cite{Wol57} for non-stationary product c-q channels and also for quantum compression \cite{Win99} (see also \cite{Sch95, JS94, BFJ+96, BCF+01}).
Hayashi proved a lower bound to the strong converse exponent for quantum compression of pure-state ensembles based on certain techniques in entanglement theory \cite{NK01, Hay02b}.
Hayashi and Nagaoka in \cite[Lemma 4]{HN03} established Verd{\'u}--Han's one-shot converse bound for general c-q channels \cite[Theorem 4]{VH94}.
Ahlswede and Cai applied the wringing technique to show the strong converse property for c-q multiple-access channels \cite{AC05}.
An interesting strong converse analysis for channel identification and entanglement-assisted communication were established via the covering lemma \cite{Win02, AW02} and channel simulation \cite[\S IV-E]{6757002}, \cite{BCR11, BBC+13}.
Another widely useful converse analysis is based on the \emph{smooth-entropy framework} \cite{tom-thesis, TH13}.
Together with the {Quantum Asymptotic Equipartition Property} \cite{TCR09} or the chain rules of the smoothed entropies \cite{TCR10, DBW+14}, numerous strong converse theorems and bounds were proved \cite{DBW+14, DMH+13, WW14, MW14_pretty_strong, Win16}.
Polyanskiy--Poor--Verd{\'u}'s meta-converse was extended to quantum coding by Wang--Renner \cite{WR13} and Matthews-Wehner \cite{MW14}.\footnote{We note that the concept of meta-converse might have  appeared in \cite[Lemma 4]{HN03}, which is  expressed in the form of the \emph{information-spectrum divergence} \cite[(2.9)]{Tan14} instead of the \emph{hypothesis-testing divergence} \cite[(2.6)]{Tan14}, \cite[(1)]{WR13}, \cite[(22)]{MW14}.
}
On the other hand, Li \cite{Li14} and Tomamichel--Hayashi \cite{TH13} established the second-order asymptotics of the hypothesis-testing divergence independently using different techniques; the above implies an asymptotic exponential strong converse for quantum hypothesis testing.
Combined with the meta-converse, second-order converse bounds for certain quantum channels were obtained \cite{TV15, WRG15, DTW16, TBR16, WTB17, CWY23}.
Strong converse bounds for quantum channels via semidefinite programming were developed by Wang \textit{et al.} \cite{WXD18, WKD19}.
Strong converse bounds were established for various quantum network information-theoretic tasks via quantum reverse hypercontractivity \cite{BDR18, Cheng2021b}.
Strong converse theorems for privacy amplification against quantum side information, classical-quantum soft covering, and quantum information decoupling were recently shown via more involved analysis \cite{KL21, SD22, LY22, LY24, CG22, SGC22a, SGC22b, SGC23, CG23}.

As mentioned in Section~\ref{sec:intro}, Nagaoka in Ref.~\cite{nagaoka01} identified Ogawa--Nagaoka's converse analysis in quantum hypothesis testing \cite{ON00} and c-q channel coding \cite{ON99} as a consequence of the data-processing inequality of the Petz--R\'enyi divergence \cite{Pet86}, aligned with the observations made by Augustin \cite{augustin78} and Sheverdyaev \cite{sheverdyaev82}.\footnote{It seems that Nagaoka in Ref.~\cite{nagaoka01} independently discovered this fact in the quantum setting without being aware of Augustin's and Sheverdyaev's works \cite{augustin78, sheverdyaev82}.}
This perspective along with K\"onig--Wehner's study \cite{konigW09}, Polyanskiy--Verd{\'u}'s generalization \cite{PV10a} led to a wealth of tight one-shot converse bounds in quantum information theory \cite{nagaoka01, konigW09, SW12, AMV12, MH11, Sha14, MO14, RW14, WWY14, WW14, HT14, BGW+15, guptaW15, LWD16, CMW16, tomamichelWW17, WTB17, DW18, WBH+20, Hayashi_2017, KW20, HM22, Mos23} as described in Introduction.
In particular, Mosonyi--Ogawa's strong converse exponent for quantum hypothesis testing \cite{MO14} can be viewed as a quantum analogue of the Han--Kobayashi bound in classical hypothesis testing \cite{HK89}.\footnote{The strong converse exponent in the Han--Kobayashi bound \cite{HK89} is expressed in terms of Blahut's form \cite{Bla74} (similar to the Haroutunian form \cite{haroutunian68} in channel coding), while \eqref{eq:MO14} is expressed in terms of the parametric form of R\'enyi divergences.
	We note that in the quantum setting, Blahut's form corresponds to the so-called log-Euclidean R\'enyi divergence that is not equal to the sandwiched R\'enyi divergence; see e.g., \cite[Lemma 5]{CHT19}, \cite[Propositions 3.8 \& 3.20]{MO17}, \cite[\S 3]{Hayashi_2017}.
}
Hence, Mosonyi and Ogawa's one-shot result in \eqref{eq:MO14} servers as a fundamental tight bound for strong converse analysis in quantum information.

In this paper, we have provided an alternative simple proof of \eqref{eq:MO14} via the variational expression of the measured R\'enyi divergence; essentially, it is only a one-step proof.
Our argument may provide a new insight on this fundamental strong converse bound.
Although the variational expressions are already known, yet in Appendix~\ref{app:Holder}, we provide a streamlined way of proving the variational expressions for the sandwiched R\'enyi divergence and the measured R\'enyi divergence.
Namely, we show that how both the variational expressions can be equivalently unified as a type of H\"older's inequality.

\section*{Acknowledgment}
H.-C.~Cheng would like to thank Bar\i\c{s} Nakibo\u{g}lu for his always help and insightful discussions.
H.-C.~Cheng is supported by the Young Scholar Fellowship (Einstein Program) of the National Science and Technology Council, Taiwan (R.O.C.) under Grants No.~NSTC 112-2636-E-002-009, No.~NSTC 113-2119-M-007-006, No.~NSTC 113-2119-M-001-006, No.~NSTC 113-2124-M-002-003, by the Yushan Young Scholar Program of the Ministry of Education, Taiwan (R.O.C.) under Grants No.~NTU-112V1904-4 and by the research project ``Pioneering Research in Forefront Quantum Computing, Learning and Engineering'' of National Taiwan University under Grant No. NTU-CC- 112L893405 and NTU-CC-113L891605. H.-C.~Cheng acknowledges the support from the “Center for Advanced Computing and Imaging in Biomedicine (NTU-113L900702)” through The Featured Areas Research Center Program within the framework of the Higher Education Sprout Project by the Ministry of Education (MOE) in Taiwan.

\appendix
\section{Variational Expressions via H\"older's inequality} \label{app:Holder}

In this section, we show that the variational expressions given in Lemma~\ref{lemm:variational} are actually equivalent to certain H\"older inequality.
Before commencing, let us first introduce necessary notations.

Let $\mathcal{H}$ be a separable Hilbert space,
$\mathcal{B(H)}$ be the space of bounded operators on $\mathcal{H}$,
and $Y\in \mathcal{B(H)}$ be an positive definite operator. Here, positive definite means that $\langle h | Y |h \rangle>0 $ for all nonzero $h \in \mathcal{H}$.
We define the \emph{noncommutative quotient} of an operator $X$ over $Y$ as:
\begin{align}
	\frac{X}{Y} := Y^{-\frac12 } X Y^{-\frac12 }.
\end{align}
For any positive operator $\sigma\in \mathcal{B(H)}$, the weighted $p$-norm of $X \in \mathcal{B(H)}$ with respect to $\sigma$ is defined as \cite{Kos84}:\footnote{For $p< 1$, the quantity defined in \eqref{eq:weighted-p-norm} is not a norm, but for simplicity we still adopt such an expression.}
\begin{align} \label{eq:weighted-p-norm}
	\left\|X\right\|_{p,\sigma} := \left( \Tr\left[ \left| \sigma^{\frac{1}{2p}} X \sigma^{\frac{1}{2p}} \right|^p \right] \right)^{\frac{1}{p}},
	\quad \forall\, p \in \mathds{R}.
\end{align}
We denote $\left\|X\right\|_{0,\sigma} := \lim_{p\to 0} \left\|X\right\|_{p,\sigma}$ and $\left\|X\right\|_{\infty,\sigma} := \lim_{p\to +\infty} \left\|X\right\|_{p,\sigma}=\left\|X\right\|_{\infty} $ which is the operator norm of $X$. The corresponding $L_p$ space we denote as that $L_p(\mathcal{B(H)},\sigma)$.
Recall the Kubo--Martin--Schwinger (KMS) $\sigma$-weighted inner product
\begin{align}
	\left\langle X, Y \right\rangle_{\sigma} := \Tr\left[ X^\dagger \sigma^{\frac12} Y \sigma^{\frac12} \right].
\end{align}
The weighted norm and inner product satisfies the following H\"older's inequality.

\begin{theo}[H\"older's inequality] \label{theo:Holder}
	Let $\alpha \in (0,\infty]$ and let $\alpha':= \frac{\alpha}{\alpha-1}$ be its conjugate index.
	Then, for all positive semi-definite $X$ and $Y$, the following holds:
	\begin{align}
		\label{eq:Holder_variant}
		\left\langle X, Y \right\rangle_{\sigma}
		\leq \left\| X \right\|_{\alpha, \sigma } \cdot \left\| Y \right\|_{\alpha', \sigma },
		\quad \alpha \geq 1;
	\end{align}
and if in additional, $\mathrm{supp}(\sigma) \subseteq \mathrm{supp}(Y)$,
\begin{align}
		\label{eq:Holder_variant_reverse}
		\left\langle X, Y \right\rangle_{\sigma}
		\geq
		\left\| X \right\|_{\alpha, \sigma } \cdot \left\| Y \right\|_{\alpha', \sigma },
		 \quad \alpha \in (0,1) \;.
	\end{align}
	
If $\alpha \in (1,\infty)$ and $\left\| X \right\|_{\alpha, \sigma } \cdot \left\| Y \right\|_{\alpha', \sigma } < +\infty$,
	the equality is attained if and only if
	\begin{align} \label{eq:equality}
		\left| \sigma^{\frac{1}{2\alpha}} X \sigma^{\frac{1}{2\alpha}} \right|^{\alpha}
		=
		c \left| \sigma^{\frac{1}{2\alpha'}} Y \sigma^{\frac{1}{2\alpha'}} \right|^{\alpha'}
	\end{align}
	for some scalar $c\geq 0$.
	
	On the other hand, if $\alpha \in (0,1)$, $\left\langle X, Y \right\rangle_{\sigma}<+\infty$, and $\left\| Y \right\|_{\alpha', \sigma }> 0$,
	the equality is attained if and only if
	\eqref{eq:equality} holds for some scalar $c\geq 0$.
\end{theo}

\begin{proof}
	For $\alpha>1$, let
	\begin{align}
		\tilde{X} = \sigma^{\frac{1}{2\alpha}} X \sigma^{\frac{1}{2\alpha}}\pl, \pl
		\tilde{Y} = \sigma^{\frac{1}{2\alpha'}} Y \sigma^{\frac{1}{2\alpha'}}.
	\end{align}
	Then, \eqref{eq:Holder_variant} follows from the standard trace H\"older's inequality for $\tilde{X}$ and $\tilde{Y}$ (see e.g., \cite[Corollary IV.2.6]{Bha97}, \cite[Theorem 6.20]{HP14}, \cite[Lemma 6]{TBH14}, \cite[Theorem A.38]{Hia21}):
	\begin{align}
		\Tr\left[ \tilde{X}
		\tilde{Y} \right]
		\leq \|\tilde{X}\|_{\alpha} \|\tilde{Y}\|_{\alpha'},
	\end{align}
	where $\|\tilde{X}\|_{\alpha} := \left( \Tr[|\tilde{X}|^\alpha] \right)^{1/\alpha} $ is the Schatten $\alpha$-norm. Also the equality condition follows from the \cite[Proposition 8]{dixmier1953formes}.
	
	For $\alpha \in (0,1)$ and 
	$\alpha'\in (-\infty,0)$,
	the reverse H\"older's inequality already appeared in	\cite[Lemma 1]{BDR18}, which by following the same reasoning holds because of the standard H\"older's inequality
	\cite[Lemma 6]{TBH14}. The same argument works for infinite-dimensional Hilbert spaces. 
\end{proof}

Now we ready to provide a simple proof of the variational expressions via H\"older's inequality, which we restate below. 

\begin{lemm1}[Variational expressions {\cite{FL13, BFT17, Hia21, Mos23}}] 
	Let $\rho$ and $\sigma$ be two density operators satisfying $\supp{(\rho)} \subseteq \supp{(\sigma)}$.
	For any order $\alpha\in (0,1)\cup (1,+\infty]$, the following identities hold:
\begin{align}
	\label{eq:variational_measured_app}
	D_\alpha^{\mathbb{M}} (\rho \Vert \sigma ) &=
	\sup_{ 0 < T \leq \mathds{1} } \left\{ \frac{\alpha}{\alpha-1} \log \Tr\left[\rho T\right]
	- \log \Tr\left[ \sigma  T^{\frac{\alpha}{\alpha-1} } \right] \right\};
	\\
	\label{eq:variationa_sandwiched_app}
	D_\alpha^*(\rho \Vert \sigma )
	&=
	\sup_{ 0 < T \leq \mathds{1} } \left\{ \frac{\alpha}{\alpha-1} \log \Tr\left[\rho T\right]
	- \log \Tr\left[ \left(  T^{\frac12} \sigma^{\frac{\alpha-1}{\alpha}}  T^{\frac12} \right)^{\frac{\alpha}{\alpha-1}} \right] \right\}.
\end{align}
For $\alpha>1$, the supremum can be relaxed to $0\leq T\leq \mathds{1}$
such that
$T$ is not orthogonal to $\rho$.
\end{lemm1}

\begin{proof}
We first consider the case of $\alpha>1$ and show \eqref{eq:variationa_sandwiched_app} for finite-dimensional $\mathcal{H}$ to present the core idea.
The case for separable infinite-dimensional $\mathcal{H}$ follows from finite-dimensional approximations (see e.g.~\cite[\S 3.3]{Mos23}).
For any bounded positive semi-definite operator $T\geq0$ that is not orthogonal to $\rho$, we have $\Tr[\rho T] \in (0,+\infty)$.
By letting $X = \frac{\rho}{\sigma}$ and $Y = T$,
H\"older's inequality (Theorem~\ref{theo:Holder}) implies
\begin{align}
	\label{eq:change_of_measure_sandwiched}
	\Tr[\rho T]
	&=\left\langle \frac{\rho}{\sigma}, T \right\rangle_{\sigma}
	\\
	\label{eq:Holder_sandwiched}
	&\leq \left\| \frac{\rho}{\sigma} \right\|_{\alpha, \sigma } \cdot \left\| T \right\|_{\alpha', \sigma }
	\\
	&= \e^{ \frac{1}{\alpha'}  D_{\alpha}^*(\rho\Vert \sigma) } \cdot \left( \Tr\left[ \left( \sigma^{\frac{1}{2\alpha'}} T \sigma^{\frac{1}{2\alpha'}} \right)^{\alpha'} \right] \right)^{\frac{1}{\alpha'}},
\end{align}
which translates to
\begin{align} \label{eq:final_sandwiched}
	D_\alpha^*(\rho \Vert \sigma )
	\geq \left\{ \alpha' \log \Tr\left[\rho T\right]
	- \log \Tr\left[ \left(  \sigma^{\frac{1}{2\alpha'}}  T \sigma^{\frac{1}{2\alpha'}}  \right)^{\alpha'} \right] \right\}, \quad \forall\, T\geq 0, \, T\not\perp \rho.
\end{align}
Given $D_{\alpha}^*(\rho\Vert \sigma)
=\alpha'\log\| \sigma^{-\frac{1}{2\alpha'}}\rho \sigma^{-\frac{1}{2\alpha'}}\|_{\alpha}<+\infty$, the inequality \eqref{eq:Holder_sandwiched} can be attained by
\begin{align} 
	T=\sigma^{-\frac{1}{2\alpha'}} \left|\sigma^{-\frac{1}{2\alpha'}}\rho \sigma^{-\frac{1}{2\alpha'}}\right|^{\al-1} \sigma^{-\frac{1}{2\alpha'}}\in L_{\al'}(\mathcal{B(H)},\sigma)\pl.
\end{align}

Here, the quotient $X = \frac{\rho}{\sigma}$ can be considered as a \emph{noncommutative likelihood ratio}, and the equality \eqref{eq:change_of_measure_sandwiched} is a kind of change of measure argument from $\rho$ to the quotient $\frac{\rho}{\sigma}$. 
The restriction to $T\le \mathds{1}$ follows from the fact that the variational expression takes same value for $T$ and $\lambda T$ for any $\lambda>0$; we can hence normalize $T$ by its operator norm.

Similarly for $\alpha \in (0,1)$, we apply the reverse H\"older's inequality and restrict to $T>0$.
Then the inequality of \eqref{eq:Holder_sandwiched} becomes reversed.
In the end, we still arrive at \eqref{eq:final_sandwiched} because $\al'=\frac{\alpha}{\alpha-1} < 0$. 

Since $\sigma$ is a density operator, then
by the density of $\mathcal{B(H)}\subset L_{\al'}(\mathcal{B(H)},\sigma)$, the inequality \eqref{eq:Holder_sandwiched} can also be approximately saturated by bounded $T\in \mathcal{B(H)}$, which concludes the variational expression of the sandwiched quantity $D^*(\rho \Vert \sigma )$ in \eqref{eq:variationa_sandwiched_app}.

\medskip

We move on to the measured quantity $D_{\alpha}^{\mathds{M}}(\rho \Vert \sigma )$ and $\al>1$. We start with a simple case that $\mathcal{H}$ is finite dimensional so that we can assume $T\geq0$ that admits an finite orthogonal decomposition:
$ T = \sum_i t(i) {P}(i)$, where $t = \left(t(i)\right)_i$ is a sequence of non-negative numbers, and ${P} = \left(P(i)\right)_i$ is a collection of mutually orthogonal projections summing to identity. 
The infinite-dimensional case follows from, again, finite-dimensional approximations and the fact that the clasical R\'enyi divergence can be approximated from finite partitions.

We further denote the induced probability mass functions $	p_{\rho,{P}}$ and $	q_{\sigma,{P}}$ under the projective measurement $P$:
\begin{align} \label{eq:distributions}
	p_{\rho,{P}}(i) := \Tr\left[ \rho {P}(i) \right],
	\quad
	q_{\sigma,{P}}(i) := \Tr\left[ \sigma {P}(i) \right], \quad \forall\, i
\end{align}
Then, we apply the same reasoning of change of measure as \eqref{eq:change_of_measure_sandwiched} and (commutative) H\"older's inequality (Theorem~\ref{theo:Holder}) to obtain
\begin{align}
	\label{eq:change_of_measure_measured}
	\Tr[\rho T]
	&=\left\langle \frac{ p_{\rho,{P}} }{ q_{\sigma,{P}} }, t \right\rangle_{ q_{\sigma,{P}} }
	\\
	\label{eq:Holder_measured}
	&\leq \left\| \frac{ p_{\rho,{P}} }{q_{\sigma,\mathrm{P}} } \right\|_{\alpha, q_{\sigma,{P}} } \cdot \left\| t \right\|_{\alpha', q_{\sigma,{P}} }
	\\ \nonumber
	&= \e^{ \frac{\alpha-1}{\alpha}  D_{\alpha}\left( p_{\rho,\mathrm{P}} \Vert q_{\sigma,{P}} \right) } \cdot \left( \sum\nolimits_i q_{\sigma,{P}}(i) t_i^{\alpha'} \right)^{\frac{1}{\alpha'}}
	\\ \nonumber
	&= \e^{ \frac{\alpha-1}{\alpha}  D_{\alpha}\left( p_{\rho,{P}} \Vert q_{\sigma,{P}} \right) } \cdot \left( \Tr\left[ \sigma T^{\alpha'} \right] \right)^{\frac{1}{\alpha'}}.
\end{align}
The above then translates to
\begin{align} \label{eq:temp}
	  D_{\alpha}\left( p_{\rho,{P}} \Vert q_{\sigma,{P}} \right)
	\geq \left\{ \alpha' \log \Tr\left[\rho T\right]
	- \log \Tr\left[ \sigma  T^{\alpha'}  \right] \right\}, 
\end{align}
which is saturated by choosing sequence $t= \left( \frac{p_{\rho,P}}{q_{\sigma,P}} \right)^{\alpha-1} $ and further maximizing over all finite PVMs $\sum_{i}P(i)=\mathds{1}$, i.e.
\begin{align*} \sup_{T\geq 0} \left\{ \alpha' \log \Tr\left[\rho T\right]
	- \log \Tr\left[ \sigma  T^{\alpha'}  \right] \right\}
	&=\sup_{\sum_i P(i)=\mathds{1}}\sup_{t_i\ge 0}\left\{ \alpha' \log \Tr\left[\rho T\right]
	- \log \Tr\left[ \sigma  T^{\alpha'}  \right] \right\}\\
&=\sup_{\sum_i P(i)=\mathds{1}} D_{\alpha}\left( p_{\rho,{P}} \Vert q_{\sigma,{P}} \right)\\
&= D_{\alpha}^{\mathds{P}}(\rho \Vert \sigma )
\\
&=D_{\alpha}^{\mathds{M}}(\rho \Vert \sigma ).
\end{align*}
The case of $\alpha\in (0,1)$ is similar using reverse H\"older inequality.

For the case of separable Hilbert spaces $\mathcal{H}$, it suffices to consider positive $T=\sum_{i} t(i) P(i)$ with a finite orthogonal decomposition. 
Take bounded $T$ such that
\begin{align} \alpha' \log \Tr\left[\rho T\right]
	- \log \Tr\left[ \sigma  T^{\alpha'}  \right] >D_{\alpha}^{\mathds{P}}(\rho\Vert \sigma)-\epsilon.
\end{align}
Since $\text{supp}(\rho) \subseteq  \text{supp}(\sigma)$ and $T\not\perp \rho$, again by homogeneity we can assume $0\leq T\le\mathds{1} $ satisfying,
\begin{align} 
	0<\tr[\rho T]\le 1, \quad  0<\tr\left[\sigma T^{\alpha'} \right]\le 1\pl.
\end{align}
Let $T=\int_{0}^1 t d{E_t}$ be its spectrum decomposition in the integral form. By function calculus, one can take $T_n=\sum_{i=0}^{n-1} \frac{i}{n}E_{[\frac{i}{n},\frac{i+1}{n})}$. 
By the following convergence in operator norm:
\begin{align}
\norm{T-T_n}{}\le \frac{1}{n}\pl, \quad
\left\|T^{t}-T_n^t\right\| \le \max\left\{\frac{1}{n^t}, 1-\left(\frac{n-1}{n}\right)^t\right\},
\quad \forall\, t\in(0,+\infty), 
\end{align}
we have
\begin{align}
\Tr\left[\rho T\right]=\lim_{n\to \infty} \Tr\left[\rho T_n\right]\pl,\pl \Tr\left[ \sigma  T^{\alpha'}\right]=\lim_{n\to \infty}\Tr\left[ \sigma  T_n^{\alpha'}\right].
\end{align}
Then
\begin{align}
\sup_{T\geq 0}
	\left\{ \alpha' \log \Tr\left[\rho T\right]
	- \log \Tr\left[ \sigma  T^{\alpha'}  \right] \right\}
	&=\sup_{T=\sum_{i=1} t(i) P(i)}
	\left\{ \alpha' \log \Tr\left[\rho T\right]
	- \log \Tr\left[ \sigma  T^{\alpha'}  \right] \right\}
	\\
	&=\sup_{\sum_{i}P(i)=\mathds{1}}
	D_{\alpha}\left( p_{\rho,{P}} \Vert q_{\sigma,{P}} \right).
\end{align}
where the second supremum is over all positive $T$ with finite spectrum decomposition, and the third supremum is over PVMs finite outcomes. Then the assertion follows from the next lemma.
\end{proof}

\begin{lemm}Let $\mathcal{H}$ be a separable Hilbert space. For $\al\in (0,+\infty]$ and any two density operators $\rho $ and $\sigma$,
\[ D_{\alpha}^{\mathds{M}}(\rho \Vert \sigma )=D_{\alpha}^{\mathds{P}}(\rho \Vert \sigma )=\sup_{\sum_{i}P(i)=\mathds{1}} D_{\alpha}\left( p_{\rho,{P}} \Vert q_{\sigma,{P}} \right)
\]
where the supremum is over all PVM $\sum_{i=1}^kP(i)=\mathds{1}$ with finite outcome.
The induced probability distributions $p_{\rho,{P}} $ and $ q_{\sigma,{P}}$
are given in \eqref{eq:distributions}.
\end{lemm}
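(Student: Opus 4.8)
The plan is to treat the two equalities in the displayed chain separately. The identity $D_\alpha^{\mathds{M}}(\rho\Vert\sigma)=D_\alpha^{\mathds{P}}(\rho\Vert\sigma)$ is already available in the literature --- \cite[Theorem 4]{BFT17} in finite dimensions and \cite[Theorem 5.8]{Hia21} (see also \cite{Mos23}) for general separable $\mathcal{H}$ --- so the only thing that needs a new proof is
\[
D_\alpha^{\mathds{P}}(\rho\Vert\sigma)=\sup\bigl\{\, D_\alpha\!\left(p_{\rho,P}\Vert q_{\sigma,P}\right) : P\text{ a finite PVM}\,\bigr\}.
\]
The inequality ``$\ge$'' is trivial, since finite PVMs form a subfamily of all (at most countable) PVMs. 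For ``$\le$'' I would fix an arbitrary PVM $P=(P(i))_i$ --- by the conventions of this paper a finite or countable family of mutually orthogonal projections with $\sum_i P(i)=\mathds{1}$, so that $p:=p_{\rho,P}$ and $q:=q_{\sigma,P}$ are genuine probability distributions --- and approximate $D_\alpha(p\Vert q)$ from below by finite coarse-grainings of $P$.

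The heart of the matter is the coarse-graining step. For each $n$, set $P_n:=\bigl(P(1),\dots,P(n),\,\mathds{1}-\sum_{i\le n}P(i)\bigr)$, which is a bona fide finite PVM because $\mathds{1}-\sum_{i\le n}P(i)$ is again an orthogonal projection; its induced distributions $p_{\rho,P_n}$ and $q_{\sigma,P_n}$ are precisely the coarse-grainings of $p$ and $q$ that merge the tail $\{n+1,n+2,\dots\}$ into a single outcome. Monotonicity of the classical R\'enyi divergence under coarse-graining gives $D_\alpha(p_{\rho,P_n}\Vert q_{\sigma,P_n})\le D_\alpha(p\Vert q)$ for every $n$, so it suffices to establish the matching limit $\lim_n D_\alpha(p_{\rho,P_n}\Vert q_{\sigma,P_n})=D_\alpha(p\Vert q)$. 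For $\alpha\in(0,1)\cup(1,\infty)$, writing $S_n:=\sum_{i=1}^n p(i)^\alpha q(i)^{1-\alpha}$ and $R_n:=\bigl(\sum_{i>n}p(i)\bigr)^{\alpha}\bigl(\sum_{i>n}q(i)\bigr)^{1-\alpha}$, one has $D_\alpha(p_{\rho,P_n}\Vert q_{\sigma,P_n})=\tfrac{1}{\alpha-1}\log(S_n+R_n)$ with $S_n$ increasing to $\sum_i p(i)^\alpha q(i)^{1-\alpha}$, and the whole thing reduces to the estimate $R_n\to 0$ when $D_\alpha(p\Vert q)<\infty$. For $\alpha<1$ this is the weighted AM--GM inequality, $R_n\le\alpha\sum_{i>n}p(i)+(1-\alpha)\sum_{i>n}q(i)\to 0$; for $\alpha>1$ it is Jensen's inequality applied to the convex map $t\mapsto t^\alpha$, which gives $R_n\le\sum_{i>n}p(i)^\alpha q(i)^{1-\alpha}\to 0$ by summability of the R\'enyi sum. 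The degenerate cases are routine: if $p(i_0)>0=q(i_0)$ for some $i_0$, already the two-outcome PVM $(P(i_0),\mathds{1}-P(i_0))$ yields $+\infty$; if (for $\alpha>1$) $\sum_i p(i)^\alpha q(i)^{1-\alpha}=\infty$, then $S_n\to\infty$ forces $D_\alpha(p_{\rho,P_n}\Vert q_{\sigma,P_n})\to\infty$; the case $\alpha=\infty$, where $D_\infty(p\Vert q)=\log\sup_i p(i)/q(i)$, is an immediate supremum argument over the $P_n$; and $\alpha=1$ is analogous, using that $\sum_i p(i)\log\tfrac{p(i)}{q(i)}$ converges absolutely whenever $D_1(p\Vert q)<\infty$. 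Thus $D_\alpha(p\Vert q)=\sup_n D_\alpha(p_{\rho,P_n}\Vert q_{\sigma,P_n})$ is bounded by the supremum over all finite PVMs, and taking the supremum over all countable $P$ closes the chain.

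I expect the only genuinely substantive point to be the tail estimate $R_n\to 0$ --- equivalently, the classical, purely measure-theoretic fact that the R\'enyi divergence of two discrete probability distributions equals the supremum of the R\'enyi divergences of all their finite quantizations. Everything else is bookkeeping: that $p_{\rho,P}(i)=\Tr[\rho P(i)]$ defines a probability distribution (using trace-class $\rho$ and $\sum_i P(i)=\mathds{1}$ in the strong operator topology) and that the coarsenings $P_n$ of a countable PVM are again finite PVMs on $\mathcal{H}$; no further quantum structure enters.
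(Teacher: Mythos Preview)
Your argument is correct and proves the lemma as stated, but it takes a somewhat different route from the paper's own proof. You work entirely within the discrete framework (at most countable PVMs, consistent with the definition of $D_\alpha^{\mathds{P}}$ given in Section~\ref{sec:Renyi}) and establish the reduction to finite PVMs by an explicit coarse-graining $P_n=(P(1),\dots,P(n),\mathds{1}-\sum_{i\le n}P(i))$ together with hands-on tail estimates for $R_n$; this is self-contained and elementary. The paper instead broadens the scope at this point: it interprets $D_\alpha^{\mathds{P}}$ as a supremum over \emph{general} PVMs $\Phi:\mathcal{A}\to\mathcal{B(H)}$ on an arbitrary Borel $\sigma$-algebra, and then invokes the classical fact \cite[Theorem~2]{EH14} that $D_\alpha(\mu\Vert\nu)$ for probability measures on any measurable space is the supremum of $D_\alpha$ over all finite partitions. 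Composing a general PVM with a finite partition of its outcome space yields a finite PVM on $\mathcal{H}$, which gives the reduction in one stroke. So the paper's proof covers continuous-outcome PVMs (e.g.\ spectral measures of operators with continuous spectrum) at the price of quoting a black-box result, whereas your proof essentially re-derives the discrete special case of that result from scratch. Both rely on the same external references for $D_\alpha^{\mathds{M}}=D_\alpha^{\mathds{P}}$.
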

\begin{proof}Recall that a general PVM on separable $H$ is a map
\[ \Phi: \mathcal{A} \to \mathcal{B(H)}\pl,\]
from a Borel $\sigma$-algebra $\mathcal{A}$ over some measure space $\Omega$ satisfying the following properties,
\begin{itemize}
\item[i)] for all $A\in \mathcal{A}$, $\Phi(A)$ is a projection operator in $\mathcal{B(H)}$.
\item[ii)] $\Phi(\varnothing)=0$ and $\Phi(\Omega)=\mathds{1}$.
\item[iii)] for disjoint $A_1,A_2\in \mathcal{A}$, $\Phi(A_1)\Phi(A_2)=0$ .
\item [iv)] for a countable family of mutually disjoint sets $A_1,A_2,\cdots$,
\[ \Phi( \cup_{i}^\infty A_i)=\sum_{i=1}^\infty\Phi(A_i)\pl.\]
\end{itemize}
The linearization of this map induces a normal map (which we also denote by $\Phi$)
\[ \Phi:L_\infty(\Omega, \mathcal{A})\to \mathcal{B(H)} \]
whose predual
\[ \Phi_*: \mathcal{S}_1(\mathcal{H})\to  (L_\infty(\Omega, \mathcal{A}))_*\]
is a map sending a density operator (in the Schatten $1$-class $\mathcal{S}_1(\mathcal{H})$) to a probability measure on $\Omega$. In general, the projectively-measured R\'enyi divergence
\[D_{\alpha}^{\mathds{P}}(\rho\Vert \sigma)=\sup_{\Phi \text{ PVM}} D_\alpha(\Phi_*(\rho)\Vert\Phi_*(\sigma)) \]
should consider the supremum over all general PVMs $\Phi:\mathcal{A}\to B(H)$. 
Here $\Phi_*(\rho)$, $\Phi_*(\sigma)$ are two probability distributions on $\Omega$. 

Now we recall that
for any two probability distributions $\mu$ and $\nu$,
\[ D_\alpha(\mu\Vert\nu)=\sup_{\{A_i\}_i} D( \mu_A \Vert \nu_A )\]
where the supremum is over any finite partition of $\Omega$ \cite[Theorem 2]{EH14}.  Here $\mu_A $ is the finite probability density with $\mu_A(i)=\mu(A_i) $ and similarly for $\nu_A$.
That is, the $D_\alpha$ for two probability measures on any type of measure spaces can be approximated by finite outcome coarse graining. Thus we have
\begin{align*}
	D_{\alpha}^{\mathds{P}}(\rho\Vert \sigma)&=\sup_{\Phi \text{ PVM}} D_\alpha(\Phi_*(\rho)\Vert\Phi_*(\sigma))
	\\
	&=\sup_{(A_i)_i\,\text{finite  partition}}\sup_{\Phi \text{ PVM}} D_\alpha(\Phi_*(\rho)_A\Vert\Phi_*(\sigma)_{A})
	\\
	&\le \sup_{\sum_{i}P(i)=\mathds{1}}D_\alpha(p_{\rho,{P}}\Vert q_{\sigma,{P}}) .
\end{align*}
Here, the last inequality follows from that
\[\Phi_*(\rho)_A(i)
=\Phi_*(\rho)(A_i)=\tr\left[\rho\Phi(A_i)\right]\pl, 
\quad \Phi_*(\sigma)_{A}(i)=\Phi_*(\sigma)(A_i)=\tr\left[\sigma\Phi(A_i)\right]
\]
are the measurement outcome of the finite PVM $\sum_{i}\Phi(A_i)=\mathds{1}$.
\end{proof}

\begin{remark}
Mosonyi established the variational expression for the so-called $(\alpha,z)$-R\'enyi divergence \cite{AD15} also using H\"older's inequality \cite[Lemma 3.23]{Mos23}.
Note that $(\alpha,\alpha)$-R\'enyi divergence coincides with the sandwiched R\'enyi divergence $D_{\alpha}^*$.
Our proof for \eqref{eq:variationa_sandwiched_app} employs H\"older's inequality in a slightly different way. Yet it yields a similar proof for the measured quantity $D_{\alpha}^{\mathds{M}}$.

Mosonyi's result also provid a variational expression for the Petz--R\'enyi divergence \cite{Pet86} (which with abuse of notation we still denote by $D_{\alpha}(\rho\Vert\sigma) := \frac{1}{\alpha-1}\log \Tr[ \rho^{\alpha} \sigma^{1-\alpha} ]$) as follows:
\begin{align}
		D_{\alpha}(\rho\Vert \sigma)
		&= \sup_{0<T\leq \mathds{1}} \left\{
		\frac{\alpha}{\alpha-1} \log \Tr\left[ \left( T^{\frac{\alpha}{2}} \rho^\alpha T^{\frac{\alpha}{2}} \right)^{\frac{1}{\alpha} } \right]
		- \log \Tr\left[ \left( T^{\frac{\alpha}{2}} \sigma^{\alpha-1} T^{\frac{\alpha}{2}}  \right)^{\frac{1}{\alpha-1}}
		\right]
		\right\}.
	\end{align}
The proof also follows from H\"older inequality, e.g.~for $\alpha>1$,
\begin{align}
		\Tr\left[ \left( T^{\frac{\alpha}{2}} \rho^\alpha T^{\frac{\alpha}{2}} \right)^{\frac{1}{\alpha} } \right]=\left\| \rho^{\frac{\alpha}{2}} T^{\frac{\alpha}{2}} \right\|_{\frac{2}{\alpha}}^2
		=\left\| \rho^{\frac{\alpha}{2}} \sigma^{\frac{1-\alpha}{2}} \sigma^{\frac{\alpha-1}{2}} T^{\frac{\alpha}{2}} \right\|_{\frac{2}{\alpha}}^2
		&\leq \left\| \rho^{\frac{\alpha}{2}} \sigma^{\frac{1-\alpha}{2}} \right\|_2^2
		\cdot \left\| T^{\frac{\alpha}{2}} \sigma^{\frac{\alpha-1}{2}} \right\|_{\frac{2}{\alpha-1}}^2 \nonumber
		\\
		&= \Tr\left[ \rho^\alpha \sigma^{1-\alpha} \right] \cdot \Tr\left[ \left( T^{\frac{\alpha}{2}} \sigma^{\alpha-1} T^{\frac{\alpha}{2}}  \right)^{\frac{1}{\alpha-1}}
		\right].
	\end{align}
\end{remark}

{\larger
\bibliographystyle{myIEEEtran}
\bibliography{reference, Hao-Chung}

\begin{thebibliography}{100}
\providecommand{\url}[1]{#1}
\csname url@samestyle\endcsname
\providecommand{\newblock}{\relax}
\providecommand{\bibinfo}[2]{#2}
\providecommand{\BIBentrySTDinterwordspacing}{\spaceskip=0pt\relax}
\providecommand{\BIBentryALTinterwordstretchfactor}{4}
\providecommand{\BIBentryALTinterwordspacing}{\spaceskip=\fontdimen2\font plus
\BIBentryALTinterwordstretchfactor\fontdimen3\font minus
  \fontdimen4\font\relax}
\providecommand{\BIBforeignlanguage}[2]{{%
\expandafter\ifx\csname l@#1\endcsname\relax
\typeout{** WARNING: IEEEtran.bst: No hyphenation pattern has been}%
\typeout{** loaded for the language `#1'. Using the pattern for}%
\typeout{** the default language instead.}%
\else
\language=\csname l@#1\endcsname
\fi
#2}}
\providecommand{\BIBdecl}{\relax}
\BIBdecl

\bibitem{Sha48}
C.~E. Shannon, ``A mathematical theory of communication,'' \emph{The Bell
  System Technical Journal}, vol.~27, pp. 379--423, 1948.

\bibitem{Wol57}
J.~Wolfowitz, ``The coding of messages subject to chance errors,''
  \href{http://dx.doi.org/10.1215/ijm/1255380682}{\emph{Illinois Journal of
  Mathematics}}, \href{http://dx.doi.org/10.1215/ijm/1255380682}{vol.~1},
  \href{http://dx.doi.org/10.1215/ijm/1255380682}{no.~4},
  \href{http://dx.doi.org/10.1215/ijm/1255380682}{pp. 591--606},
  \href{http://dx.doi.org/10.1215/ijm/1255380682}{12 1957}.

\bibitem{CN20}
\BIBentryALTinterwordspacing
H.-C. Cheng and B.~Nakibo\u{g}lu, ``Refined strong converse for the constant
  composition codes,'' in \emph{2020 IEEE International Symposium on
  Information Theory (ISIT)}.\hskip 1em plus 0.5em minus 0.4em\relax IEEE, June
  2020. [Online]. Available: \url{https://arxiv.org/abs/2002.11414}
\BIBentrySTDinterwordspacing

\bibitem{altugW14A}
Y.~Altu{\u{g}} and A.~B. Wagner, ``Refinement of the sphere-packing bound:
  Asymmetric channels,''
  \href{http://dx.doi.org/10.1109/TIT.2014.2299275}{\emph{IEEE Transactions on
  Information Theory}},
  \href{http://dx.doi.org/10.1109/TIT.2014.2299275}{vol.~60},
  \href{http://dx.doi.org/10.1109/TIT.2014.2299275}{no.~3},
  \href{http://dx.doi.org/10.1109/TIT.2014.2299275}{pp. 1592--1614},
  \href{http://dx.doi.org/10.1109/TIT.2014.2299275}{March 2014}.

\bibitem{CHT19}
H.-C. Cheng, M.-H. Hsieh, and M.~Tomamichel, ``Quantum sphere-packing bounds
  with polynomial prefactors,''
  \href{http://dx.doi.org/10.1109/tit.2019.2891347}{\emph{{IEEE} Transactions
  on Information Theory}},
  \href{http://dx.doi.org/10.1109/tit.2019.2891347}{vol.~65},
  \href{http://dx.doi.org/10.1109/tit.2019.2891347}{no.~5},
  \href{http://dx.doi.org/10.1109/tit.2019.2891347}{pp. 2872--2898},
  \href{http://dx.doi.org/10.1109/tit.2019.2891347}{May 2019}.

\bibitem{nakiboglu19-ISIT}
B.~{Nakibo{\u{g}}lu}, ``{A Simple Derivation of the Refined SPB for the
  Constant Composition Codes},'' in \emph{2019 IEEE International Symposium on
  Information Theory (ISIT)}, Paris, France, July 2019,
  \href{http://dx.doi.org/10.1109/ISIT.2019.8849819}{pp. 2659--2663}.

\bibitem{nakiboglu20F}
------, ``A simple derivation of the refined sphere packing bound under certain
  symmetry hypotheses,''
  \href{http://dx.doi.org/10.3906/mat-1912-106}{\emph{Turkish Journal Of
  Mathematics}}, \href{http://dx.doi.org/10.3906/mat-1912-106}{vol.~44},
  \href{http://dx.doi.org/10.3906/mat-1912-106}{no.~3},
  \href{http://dx.doi.org/10.3906/mat-1912-106}{pp. 919--948},
  \href{http://dx.doi.org/10.3906/mat-1912-106}{2020},
  (\href{http://arxiv.org/abs/1904.12780}{arXiv:1904.12780} [cs.IT]).

\bibitem{AW21}
Y.~Altu{\u{g}} and A.~B. Wagner, ``On exact asymptotics of the error
  probability in channel coding: Symmetric channels,''
  \href{http://dx.doi.org/10.1109/tit.2020.3042592}{\emph{IEEE Transactions on
  Information Theory}},
  \href{http://dx.doi.org/10.1109/tit.2020.3042592}{vol.~67},
  \href{http://dx.doi.org/10.1109/tit.2020.3042592}{no.~2},
  \href{http://dx.doi.org/10.1109/tit.2020.3042592}{pp. 844--868},
  \href{http://dx.doi.org/10.1109/tit.2020.3042592}{2021}.

\bibitem{Cheng2021a}
H.-C. Cheng, E.~P. Hanson, N.~Datta, and M.-H. Hsieh, ``Non-asymptotic
  classical data compression with quantum side information,''
  \href{http://dx.doi.org/10.1109/TIT.2020.3038517}{\emph{IEEE Transactions on
  Information Theory}},
  \href{http://dx.doi.org/10.1109/TIT.2020.3038517}{vol.~67},
  \href{http://dx.doi.org/10.1109/TIT.2020.3038517}{no.~2},
  \href{http://dx.doi.org/10.1109/TIT.2020.3038517}{February 2021}.

\bibitem{nagaoka01}
H.~Nagaoka, ``Strong converse theorems in quantum information theory,'' in
  \emph{Proceedings of the ERATO Conference on Quantum Information Science
  (EQIS)}, vol.~33, 2001, (also appeared as Chap.~4 of Asymptotic Theory of
  Quantum Statistical Inference: Selected Papers, ed. by M. Hayashi).

\bibitem{ON00}
T.~Ogawa and H.~Nagaoka, ``Strong converse and {Stein's} lemma in quantum
  hypothesis testing,'' \href{http://dx.doi.org/10.1109/18.887855}{\emph{{IEEE}
  Transaction on Information Theory}},
  \href{http://dx.doi.org/10.1109/18.887855}{vol.~46},
  \href{http://dx.doi.org/10.1109/18.887855}{no.~7},
  \href{http://dx.doi.org/10.1109/18.887855}{pp. 2428--2433},
  \href{http://dx.doi.org/10.1109/18.887855}{2000}.

\bibitem{ON99}
------, ``Strong converse to the quantum channel coding theorem,''
  \href{http://dx.doi.org/10.1109/18.796386}{\emph{{IEEE} Transaction on
  Information Theory}}, \href{http://dx.doi.org/10.1109/18.796386}{vol.~45},
  \href{http://dx.doi.org/10.1109/18.796386}{no.~7},
  \href{http://dx.doi.org/10.1109/18.796386}{pp. 2486--2489},
  \href{http://dx.doi.org/10.1109/18.796386}{1999}.

\bibitem{Pet86}
D.~Petz, ``Quasi-entropies for finite quantum systems,''
  \href{http://dx.doi.org/10.1016/0034-4877(86)90067-4}{\emph{Reports on
  Mathematical Physics}},
  \href{http://dx.doi.org/10.1016/0034-4877(86)90067-4}{vol.~23},
  \href{http://dx.doi.org/10.1016/0034-4877(86)90067-4}{no.~1},
  \href{http://dx.doi.org/10.1016/0034-4877(86)90067-4}{pp. 57--65},
  \href{http://dx.doi.org/10.1016/0034-4877(86)90067-4}{Feb 1986}.

\bibitem{konigW09}
R.~K\"onig and S.~Wehner, ``A strong converse for classical channel coding
  using entangled inputs,''
  \href{http://dx.doi.org/10.1103/PhysRevLett.103.070504}{\emph{Physical Review
  Letters}}, \href{http://dx.doi.org/10.1103/PhysRevLett.103.070504}{vol. 103},
  \href{http://dx.doi.org/10.1103/PhysRevLett.103.070504}{no.~7},
  \href{http://dx.doi.org/10.1103/PhysRevLett.103.070504}{p.
  \href{https://arxiv.org/abs/0903.2838}{070504}},
  \href{http://dx.doi.org/10.1103/PhysRevLett.103.070504}{Aug 2009}.

\bibitem{SW12}
N.~Sharma and N.~A. Warsi, ``Fundamental bound on the reliability of quantum
  information transmission,''
  \href{http://dx.doi.org/10.1103/physrevlett.110.080501}{\emph{Physical Review
  Letters}}, \href{http://dx.doi.org/10.1103/physrevlett.110.080501}{vol. 110},
  \href{http://dx.doi.org/10.1103/physrevlett.110.080501}{p. 080501},
  \href{http://dx.doi.org/10.1103/physrevlett.110.080501}{February 2013}.

\bibitem{AMV12}
\BIBentryALTinterwordspacing
K.~M.~R. Audenaert, M.~Mosonyi, and F.~Verstraete, ``Quantum state
  discrimination bounds for finite sample size,''
  \href{http://dx.doi.org/10.1063/1.4768252}{\emph{Journal of Mathematical
  Physics}}, \href{http://dx.doi.org/10.1063/1.4768252}{vol.~53},
  \href{http://dx.doi.org/10.1063/1.4768252}{no.~12},
  \href{http://dx.doi.org/10.1063/1.4768252}{2012}. [Online]. Available:
  \url{http://dx.doi.org/10.1063/1.4768252}
\BIBentrySTDinterwordspacing

\bibitem{MH11}
M.~Mosonyi and F.~Hiai, ``On the quantum {R{\'e}nyi} relative entropies and
  related capacity formulas,''
  \href{http://dx.doi.org/10.1109/tit.2011.2110050}{\emph{IEEE Transactions on
  Information Theory}},
  \href{http://dx.doi.org/10.1109/tit.2011.2110050}{vol.~57},
  \href{http://dx.doi.org/10.1109/tit.2011.2110050}{no.~4},
  \href{http://dx.doi.org/10.1109/tit.2011.2110050}{pp. 2474--2487},
  \href{http://dx.doi.org/10.1109/tit.2011.2110050}{April 2011}.

\bibitem{Sha14}
N.~Sharma, ``A strong converse for the quantum state merging protocol,'' 2014,
  arXiv:1404.5940 [quant-ph].

\bibitem{MO14}
M.~Mosonyi and T.~Ogawa, ``Quantum hypothesis testing and the operational
  interpretation of the quantum {R{\'{e}}nyi} relative entropies,''
  \href{http://dx.doi.org/10.1007/s00220-014-2248-x}{\emph{Communications in
  Mathematical Physics}},
  \href{http://dx.doi.org/10.1007/s00220-014-2248-x}{vol. 334},
  \href{http://dx.doi.org/10.1007/s00220-014-2248-x}{no.~3},
  \href{http://dx.doi.org/10.1007/s00220-014-2248-x}{pp. 1617--1648},
  \href{http://dx.doi.org/10.1007/s00220-014-2248-x}{December 2014}.

\bibitem{RW14}
B.~Roy~Bardhan and M.~M. Wilde, ``Strong converse rates for classical
  communication over thermal and additive noise bosonic channels,''
  \href{http://dx.doi.org/10.1103/physreva.89.022302}{\emph{Physical Review
  A}}, \href{http://dx.doi.org/10.1103/physreva.89.022302}{vol.~89},
  \href{http://dx.doi.org/10.1103/physreva.89.022302}{no.~2},
  \href{http://dx.doi.org/10.1103/physreva.89.022302}{2014}.

\bibitem{WWY14}
M.~M. Wilde, A.~Winter, and D.~Yang, ``Strong converse for the classical
  capacity of entanglement-breaking and {Hadamard} channels via a sandwiched
  {R{\'{e}}nyi} relative entropy,''
  \href{http://dx.doi.org/10.1007/s00220-014-2122-x}{\emph{Communications in
  Mathematical Physics}},
  \href{http://dx.doi.org/10.1007/s00220-014-2122-x}{vol. 331},
  \href{http://dx.doi.org/10.1007/s00220-014-2122-x}{no.~2},
  \href{http://dx.doi.org/10.1007/s00220-014-2122-x}{pp. 593--622},
  \href{http://dx.doi.org/10.1007/s00220-014-2122-x}{Jul 2014}.

\bibitem{WW14}
\BIBentryALTinterwordspacing
M.~M. Wilde and A.~Winter, ``{Strong Converse for the Quantum Capacity of the
  Erasure Channel for Almost All Codes},'' in \emph{9th Conference on the
  Theory of Quantum Computation, Communication and Cryptography (TQC 2014)},
  ser. Leibniz International Proceedings in Informatics (LIPIcs), S.~T. Flammia
  and A.~W. Harrow, Eds.,
  \href{http://dx.doi.org/10.4230/LIPIcs.TQC.2014.52}{vol.~27}.\hskip 1em plus
  0.5em minus 0.4em\relax Dagstuhl, Germany: Schloss Dagstuhl --
  Leibniz-Zentrum f{\"u}r Informatik, 2014,
  \href{http://dx.doi.org/10.4230/LIPIcs.TQC.2014.52}{pp. 52--66}. [Online].
  Available:
  \url{https://drops.dagstuhl.de/entities/document/10.4230/LIPIcs.TQC.2014.52}
\BIBentrySTDinterwordspacing

\bibitem{HT14}
M.~Hayashi and M.~Tomamichel, ``Correlation detection and an operational
  interpretation of the {R{\'{e}}nyi} mutual information,''
  \href{http://dx.doi.org/10.1063/1.4964755}{\emph{Journal of Mathematical
  Physics}}, \href{http://dx.doi.org/10.1063/1.4964755}{vol.~57},
  \href{http://dx.doi.org/10.1063/1.4964755}{no.~10},
  \href{http://dx.doi.org/10.1063/1.4964755}{p. 102201},
  \href{http://dx.doi.org/10.1063/1.4964755}{Oct 2016}.

\bibitem{BGW+15}
B.~R. Bardhan, R.~Garcia-Patron, M.~M. Wilde, and A.~Winter, ``Strong converse
  for the classical capacity of optical quantum communication channels,''
  \href{http://dx.doi.org/10.1109/tit.2015.2403840}{\emph{IEEE Transactions on
  Information Theory}},
  \href{http://dx.doi.org/10.1109/tit.2015.2403840}{vol.~61},
  \href{http://dx.doi.org/10.1109/tit.2015.2403840}{no.~4},
  \href{http://dx.doi.org/10.1109/tit.2015.2403840}{pp. 1842--1850},
  \href{http://dx.doi.org/10.1109/tit.2015.2403840}{2015}.

\bibitem{guptaW15}
M.~K. Gupta and M.~M. Wilde, ``Multiplicativity of completely bounded
  {$p$}-norms implies a strong converse for entanglement-assisted capacity,''
  \href{http://dx.doi.org/10.1007/s00220-014-2212-9}{\emph{Communications in
  Mathematical Physics}},
  \href{http://dx.doi.org/10.1007/s00220-014-2212-9}{vol. 334},
  \href{http://dx.doi.org/10.1007/s00220-014-2212-9}{pp. 867--887},
  \href{http://dx.doi.org/10.1007/s00220-014-2212-9}{2015}.

\bibitem{LWD16}
F.~Leditzky, M.~M. Wilde, and N.~Datta, ``Strong converse theorems using
  {R\'enyi} entropies,''
  \href{http://dx.doi.org/10.1063/1.4960099}{\emph{Journal of Mathematical
  Physics}}, \href{http://dx.doi.org/10.1063/1.4960099}{vol.~57},
  \href{http://dx.doi.org/10.1063/1.4960099}{no.~8},
  \href{http://dx.doi.org/10.1063/1.4960099}{p. 082202},
  \href{http://dx.doi.org/10.1063/1.4960099}{Aug. 2016}.

\bibitem{CMW16}
T.~Cooney, M.~Mosonyi, and M.~M. Wilde, ``Strong converse exponents for a
  quantum channel discrimination problem and quantum-feedback-assisted
  communication,''
  \href{http://dx.doi.org/10.1007/s00220-016-2645-4}{\emph{Communications in
  Mathematical Physics}},
  \href{http://dx.doi.org/10.1007/s00220-016-2645-4}{vol. 344},
  \href{http://dx.doi.org/10.1007/s00220-016-2645-4}{no.~3},
  \href{http://dx.doi.org/10.1007/s00220-016-2645-4}{pp. 797--829},
  \href{http://dx.doi.org/10.1007/s00220-016-2645-4}{May 2016}.

\bibitem{tomamichelWW17}
M.~{Tomamichel}, M.~M. {Wilde}, and A.~{Winter}, ``Strong converse rates for
  quantum communication,''
  \href{http://dx.doi.org/10.1109/TIT.2016.2615847}{\emph{IEEE Transactions on
  Information Theory}},
  \href{http://dx.doi.org/10.1109/TIT.2016.2615847}{vol.~63},
  \href{http://dx.doi.org/10.1109/TIT.2016.2615847}{no.~1},
  \href{http://dx.doi.org/10.1109/TIT.2016.2615847}{pp. 715--727},
  \href{http://dx.doi.org/10.1109/TIT.2016.2615847}{January 2017}.

\bibitem{WTB17}
M.~M. Wilde, M.~Tomamichel, and M.~Berta, ``Converse bounds for private
  communication over quantum channels,''
  \href{http://dx.doi.org/10.1109/tit.2017.2648825}{\emph{IEEE Transactions on
  Information Theory}},
  \href{http://dx.doi.org/10.1109/tit.2017.2648825}{vol.~63},
  \href{http://dx.doi.org/10.1109/tit.2017.2648825}{no.~3},
  \href{http://dx.doi.org/10.1109/tit.2017.2648825}{pp. 1792--1817},
  \href{http://dx.doi.org/10.1109/tit.2017.2648825}{2017}.

\bibitem{DW18}
D.~Ding and M.~M. Wilde, ``Strong converse for the feedback-assisted classical
  capacity of entanglement-breaking channels,''
  \href{http://dx.doi.org/10.1134/s0032946018010015}{\emph{Problems of
  Information Transmission}},
  \href{http://dx.doi.org/10.1134/s0032946018010015}{vol.~54},
  \href{http://dx.doi.org/10.1134/s0032946018010015}{no.~1},
  \href{http://dx.doi.org/10.1134/s0032946018010015}{pp. 1--19},
  \href{http://dx.doi.org/10.1134/s0032946018010015}{2018}.

\bibitem{WBH+20}
M.~M. Wilde, M.~Berta, C.~Hirche, and E.~Kaur, ``Amortized channel divergence
  for asymptotic quantum channel discrimination,''
  \href{http://dx.doi.org/10.1007/s11005-020-01297-7}{\emph{Letters in
  Mathematical Physics}},
  \href{http://dx.doi.org/10.1007/s11005-020-01297-7}{vol. 110},
  \href{http://dx.doi.org/10.1007/s11005-020-01297-7}{no.~8},
  \href{http://dx.doi.org/10.1007/s11005-020-01297-7}{pp. 2277--2336},
  \href{http://dx.doi.org/10.1007/s11005-020-01297-7}{2020}.

\bibitem{Hayashi_2017}
M.~Hayashi, \emph{Quantum Information Theory: Mathematical Foundation}.\hskip
  1em plus 0.5em minus 0.4em\relax Springer Berlin Heidelberg, 2017.

\bibitem{KW20}
S.~Khatri and M.~M. Wilde, ``Principles of quantum communication theory: A
  modern approach,'' \emph{arXiv:2011.04672 [quant-ph]}, 2020.

\bibitem{HM22}
F.~Hiai and M.~Mosonyi, ``Quantum r{\'e}nyi divergences and the strong converse
  exponent of state discrimination in operator algebras,''
  \href{http://dx.doi.org/10.1007/s00023-022-01250-5}{\emph{Annales Henri
  Poincar{\'e}}}, \href{http://dx.doi.org/10.1007/s00023-022-01250-5}{vol.~24},
  \href{http://dx.doi.org/10.1007/s00023-022-01250-5}{no.~5},
  \href{http://dx.doi.org/10.1007/s00023-022-01250-5}{pp. 1681--1724},
  \href{http://dx.doi.org/10.1007/s00023-022-01250-5}{Nov. 2022}.

\bibitem{Mos23}
M.~Mosonyi, ``The strong converse exponent of discriminating
  infinite-dimensional quantum states,''
  \href{http://dx.doi.org/10.1007/s00220-022-04598-1}{\emph{Communications in
  Mathematical Physics}},
  \href{http://dx.doi.org/10.1007/s00220-022-04598-1}{vol. 400},
  \href{http://dx.doi.org/10.1007/s00220-022-04598-1}{no.~1},
  \href{http://dx.doi.org/10.1007/s00220-022-04598-1}{pp. 83--132},
  \href{http://dx.doi.org/10.1007/s00220-022-04598-1}{2023}.

\bibitem{Bei13}
S.~Beigi, ``Sandwiched {R{\'e}nyi} divergence satisfies data processing
  inequality,'' \href{http://dx.doi.org/10.1063/1.4838855}{\emph{Journal of
  Mathematical Physics}}, \href{http://dx.doi.org/10.1063/1.4838855}{vol.~54},
  \href{http://dx.doi.org/10.1063/1.4838855}{no.~12},
  \href{http://dx.doi.org/10.1063/1.4838855}{p. 122202},
  \href{http://dx.doi.org/10.1063/1.4838855}{2013}.

\bibitem{FL13}
R.~L. Frank and E.~H. Lieb, ``Monotonicity of a relative {R{\'e}nyi} entropy,''
  \href{http://dx.doi.org/10.1063/1.4838835}{\emph{Journal of Mathematical
  Physics}}, \href{http://dx.doi.org/10.1063/1.4838835}{vol.~54},
  \href{http://dx.doi.org/10.1063/1.4838835}{no.~12},
  \href{http://dx.doi.org/10.1063/1.4838835}{p. 122201},
  \href{http://dx.doi.org/10.1063/1.4838835}{2013}.

\bibitem{MDS+13}
M.~M{\"u}ller-Lennert, F.~Dupuis, O.~Szehr, S.~Fehr, and M.~Tomamichel, ``On
  quantum {R{\'e}nyi} entropies: A new generalization and some properties,''
  \href{http://dx.doi.org/10.1063/1.4838856}{\emph{Journal of Mathematical
  Physics}}, \href{http://dx.doi.org/10.1063/1.4838856}{vol.~54},
  \href{http://dx.doi.org/10.1063/1.4838856}{no.~12},
  \href{http://dx.doi.org/10.1063/1.4838856}{p. 122203},
  \href{http://dx.doi.org/10.1063/1.4838856}{2013}.

\bibitem{Jencova_I}
A.~Jen{\v{c}}ov{\'a}, ``R{\'e}nyi relative entropies and noncommutative
  {$L_p$}-spaces {I},''
  \href{http://dx.doi.org/10.1007/s00023-018-0683-5}{\emph{Annales Henri
  Poincar{\'e}}}, \href{http://dx.doi.org/10.1007/s00023-018-0683-5}{vol.~19},
  \href{http://dx.doi.org/10.1007/s00023-018-0683-5}{no.~8},
  \href{http://dx.doi.org/10.1007/s00023-018-0683-5}{p. 2513–2542},
  \href{http://dx.doi.org/10.1007/s00023-018-0683-5}{Jun. 2018}.

\bibitem{Hia21}
F.~Hiai, \emph{Quantum {$f$}-Divergences in {von Neumann} Algebras:
  Reversibility of Quantum Operations}.\hskip 1em plus 0.5em minus 0.4em\relax
  Springer Singapore, 2021.

\bibitem{MO17}
M.~Mosonyi and T.~Ogawa, ``Strong converse exponent for classical-quantum
  channel coding,''
  \href{http://dx.doi.org/10.1007/s00220-017-2928-4}{\emph{Communications in
  Mathematical Physics}},
  \href{http://dx.doi.org/10.1007/s00220-017-2928-4}{vol. 355},
  \href{http://dx.doi.org/10.1007/s00220-017-2928-4}{no.~1},
  \href{http://dx.doi.org/10.1007/s00220-017-2928-4}{pp. 373--426},
  \href{http://dx.doi.org/10.1007/s00220-017-2928-4}{Oct 2017}.

\bibitem{MO18}
------, ``Divergence radii and the strong converse exponent of
  classical-quantum channel coding with constant compositions,''
  \href{http://dx.doi.org/10.1109/tit.2020.3041205}{\emph{{IEEE} Transactions
  on Information Theory}},
  \href{http://dx.doi.org/10.1109/tit.2020.3041205}{vol.~67},
  \href{http://dx.doi.org/10.1109/tit.2020.3041205}{no.~3},
  \href{http://dx.doi.org/10.1109/tit.2020.3041205}{pp. 1668--1698},
  \href{http://dx.doi.org/10.1109/tit.2020.3041205}{mar 2021}.

\bibitem{LY22}
Y.~Y. Ke~Li, ``Strong converse exponent for entanglement-assisted
  communication,'' 2022, arXiv:2209.00555 [quant-ph].

\bibitem{CHDH22}
H.-C. Cheng, E.~P. Hanson, N.~Datta, and M.-H. Hsieh, ``Duality between source
  coding with quantum side information and classical-quantum channel coding,''
  \href{http://dx.doi.org/10.1109/tit.2022.3182748}{\emph{{IEEE} Transactions
  on Information Theory}},
  \href{http://dx.doi.org/10.1109/tit.2022.3182748}{vol.~68},
  \href{http://dx.doi.org/10.1109/tit.2022.3182748}{no.~11},
  \href{http://dx.doi.org/10.1109/tit.2022.3182748}{pp. 7315--7345},
  \href{http://dx.doi.org/10.1109/tit.2022.3182748}{nov 2022}.

\bibitem{BFT17}
M.~Berta, O.~Fawzi, and M.~Tomamichel, ``On variational expressions for quantum
  relative entropies,''
  \href{http://dx.doi.org/10.1007/s11005-017-0990-7}{\emph{Letters in
  Mathematical Physics}},
  \href{http://dx.doi.org/10.1007/s11005-017-0990-7}{vol. 107},
  \href{http://dx.doi.org/10.1007/s11005-017-0990-7}{no.~12},
  \href{http://dx.doi.org/10.1007/s11005-017-0990-7}{pp. 2239--2265},
  \href{http://dx.doi.org/10.1007/s11005-017-0990-7}{September 2017}.

\bibitem{Liu18}
J.~Liu, ``Information theory from a functional viewpoint,'' PhD thesis,
  Department of Electrical Engineering, Princeton University, January 2018.

\bibitem{BDR18}
S.~Beigi, N.~Datta, and C.~Rouz{\'{e}}, ``Quantum reverse hypercontractivity:
  Its tensorization and application to strong converses,''
  \href{http://dx.doi.org/10.1007/s00220-020-03750-z}{\emph{Communications in
  Mathematical Physics}},
  \href{http://dx.doi.org/10.1007/s00220-020-03750-z}{May 2020}.

\bibitem{Cheng2021b}
H.-C. Cheng, N.~Dattaand, and C.~Rou\'ze, ``Strong converse bounds in quantum
  network information theory,''
  \href{http://dx.doi.org/TIT.2021.3058166}{\emph{IEEE Transactions on
  Information Theory}}, \href{http://dx.doi.org/TIT.2021.3058166}{vol.~67},
  \href{http://dx.doi.org/TIT.2021.3058166}{no.~4},
  \href{http://dx.doi.org/TIT.2021.3058166}{April 2021}.

\bibitem{Esposito}
\BIBentryALTinterwordspacing
A.~R. Esposito, ``A functional perspective on information measures,'' Ph.D.
  dissertation, Lausanne,
  \href{http://dx.doi.org/https://doi.org/10.5075/epfl-thesis-9122}{2022}.
  [Online]. Available: \url{http://infoscience.epfl.ch/record/294547}
\BIBentrySTDinterwordspacing

\bibitem{Ren62}
A.~R{\'e}nyi, ``On measures of entropy and information,'' \emph{Proc. 4th
  Berkeley Symp. on Math. Statist. Probability}, vol.~1, pp. 547--561, 1962.

\bibitem{EH14}
T.~van Erven and P.~Harremoes, ``R{\'{e}}nyi divergence and {Kullback-Leibler}
  divergence,'' \href{http://dx.doi.org/10.1109/tit.2014.2320500}{\emph{{IEEE}
  Transactions on Information Theory}},
  \href{http://dx.doi.org/10.1109/tit.2014.2320500}{vol.~60},
  \href{http://dx.doi.org/10.1109/tit.2014.2320500}{no.~7},
  \href{http://dx.doi.org/10.1109/tit.2014.2320500}{pp. 3797--3820},
  \href{http://dx.doi.org/10.1109/tit.2014.2320500}{jul 2014}.

\bibitem{Ume54}
H.~Umegaki, ``Conditional expectation in an operator algebra, {I},''
  \href{http://dx.doi.org/10.2748/tmj/1178245177}{\emph{Tohoku Mathematical
  Journal}}, \href{http://dx.doi.org/10.2748/tmj/1178245177}{vol.~6},
  \href{http://dx.doi.org/10.2748/tmj/1178245177}{no. 2-3},
  \href{http://dx.doi.org/10.2748/tmj/1178245177}{jan 1954}.

\bibitem{Jencova_II}
A.~Jen{\v{c}}ov{\'a}, ``R{\'e}nyi relative entropies and noncommutative
  {$L_p$}-spaces {II},''
  \href{http://dx.doi.org/10.1007/s00023-021-01074-9}{\emph{Annales Henri
  Poincar{\'e}}}, \href{http://dx.doi.org/10.1007/s00023-021-01074-9}{vol.~22},
  \href{http://dx.doi.org/10.1007/s00023-021-01074-9}{no.~10},
  \href{http://dx.doi.org/10.1007/s00023-021-01074-9}{pp. 3235--3254},
  \href{http://dx.doi.org/10.1007/s00023-021-01074-9}{Jun. 2021}.

\bibitem{AD15}
K.~M.~R. Audenaert and N.~Datta, ``{$\alpha$}-{$z$}-{R{\'{e}}nyi} relative
  entropies,'' \href{http://dx.doi.org/10.1063/1.4906367}{\emph{Journal of
  Mathematical Physics}}, \href{http://dx.doi.org/10.1063/1.4906367}{vol.~56},
  \href{http://dx.doi.org/10.1063/1.4906367}{no.~2},
  \href{http://dx.doi.org/10.1063/1.4906367}{p. 022202},
  \href{http://dx.doi.org/10.1063/1.4906367}{Feb 2015}.

\bibitem{LT76}
E.~H. Lieb and W.~E. Thirring, ``Inequalities for the moments of the
  eigenvalues of the {Schrodinger} {Hamiltonian} and their relation to
  {Sobolev} inequalities.''\hskip 1em plus 0.5em minus 0.4em\relax Walter de
  Gruyter {GmbH}.

\bibitem{Ara90}
\BIBentryALTinterwordspacing
H.~Araki, ``On an inequality of lieb and {Thirring},''
  \href{http://dx.doi.org/10.1007/bf01045887}{\emph{Letters in Mathematical
  Physics}}, \href{http://dx.doi.org/10.1007/bf01045887}{vol.~19},
  \href{http://dx.doi.org/10.1007/bf01045887}{no.~2},
  \href{http://dx.doi.org/10.1007/bf01045887}{pp. 167--170},
  \href{http://dx.doi.org/10.1007/bf01045887}{feb 1990}. [Online]. Available:
  \url{https://doi.org/10.1007%2Fbf01045887}
\BIBentrySTDinterwordspacing

\bibitem{Hia94}
F.~Hiai, ``Equality cases in matrix norm inequalities of {Golden}--{Thompson}
  type,'' \href{http://dx.doi.org/10.1080/03081089408818297}{\emph{Linear and
  Multilinear Algebra}},
  \href{http://dx.doi.org/10.1080/03081089408818297}{vol.~36},
  \href{http://dx.doi.org/10.1080/03081089408818297}{no.~4},
  \href{http://dx.doi.org/10.1080/03081089408818297}{pp. 239--249},
  \href{http://dx.doi.org/10.1080/03081089408818297}{Apr. 1994}.

\bibitem{Don86}
M.~J. Donald, ``On the relative entropy,''
  \href{http://dx.doi.org/10.1007/bf01212339}{\emph{Communications in
  Mathematical Physics}}, \href{http://dx.doi.org/10.1007/bf01212339}{vol.
  105}, \href{http://dx.doi.org/10.1007/bf01212339}{no.~1},
  \href{http://dx.doi.org/10.1007/bf01212339}{pp. 13--34},
  \href{http://dx.doi.org/10.1007/bf01212339}{Mar 1986}.

\bibitem{Pet88}
D.~Petz, ``A variational expression for the relative entropy,''
  \href{http://dx.doi.org/10.1007/bf01225040}{\emph{Communications in
  Mathematical Physics}}, \href{http://dx.doi.org/10.1007/bf01225040}{vol.
  114}, \href{http://dx.doi.org/10.1007/bf01225040}{no.~2},
  \href{http://dx.doi.org/10.1007/bf01225040}{pp. 345--349},
  \href{http://dx.doi.org/10.1007/bf01225040}{Jun 1988}.

\bibitem{Pet08}
------, \emph{Quantum Information Theory and Quantum Statistics}.\hskip 1em
  plus 0.5em minus 0.4em\relax Springer Berlin, Heidelberg, 2008.

\bibitem{HP91}
F.~Hiai and D.~Petz, ``The proper formula for relative entropy and its
  asymptotics in quantum probability,''
  \href{http://dx.doi.org/10.1007/bf02100287}{\emph{Communications in
  Mathematical Physics}}, \href{http://dx.doi.org/10.1007/bf02100287}{vol.
  143}, \href{http://dx.doi.org/10.1007/bf02100287}{no.~1},
  \href{http://dx.doi.org/10.1007/bf02100287}{pp. 99--114},
  \href{http://dx.doi.org/10.1007/bf02100287}{Dec 1991}.

\bibitem{hiai2017different}
F.~Hiai and M.~Mosonyi, ``Different quantum {$f$}-divergences and the
  reversibility of quantum operations,''
  \href{http://dx.doi.org/10.1142/S0129055X17500234}{\emph{Reviews in
  Mathematical Physics}},
  \href{http://dx.doi.org/10.1142/S0129055X17500234}{vol.~29},
  \href{http://dx.doi.org/10.1142/S0129055X17500234}{no.~07},
  \href{http://dx.doi.org/10.1142/S0129055X17500234}{p. 1750023},
  \href{http://dx.doi.org/10.1142/S0129055X17500234}{2017}.

\bibitem{fawzi2022asymptotic}
O.~Fawzi, L.~Gao, and M.~Rahaman, ``Asymptotic equipartition theorems in von
  neumann algebras,'' \emph{arXiv preprint arXiv:2212.14700}, 2022.

\bibitem{Wil17b}
M.~M. Wilde, ``Position-based coding and convex splitting for private
  communication over quantum channels,''
  \href{http://dx.doi.org/10.1007/s11128-017-1718-4}{\emph{Quantum Information
  Processing}}, \href{http://dx.doi.org/10.1007/s11128-017-1718-4}{vol.~16},
  \href{http://dx.doi.org/10.1007/s11128-017-1718-4}{no.~10},
  \href{http://dx.doi.org/10.1007/s11128-017-1718-4}{sep 2017}.

\bibitem{CGH18}
H.-C. Cheng, L.~Gao, and M.-H. Hsieh, ``Properties of noncommutative
  r{\'{e}}nyi and {Augustin} information,''
  \href{http://dx.doi.org/10.1007/s00220-022-04319-8}{\emph{Communications in
  Mathematical Physics}},
  \href{http://dx.doi.org/10.1007/s00220-022-04319-8}{feb 2022}.

\bibitem{DJK+06}
I.~Devetak, M.~Junge, C.~King, and M.~B. Ruskai, ``Multiplicativity of
  completely bounded p-norms implies a new additivity result,''
  \href{http://dx.doi.org/10.1007/s00220-006-0034-0}{\emph{Communications in
  Mathematical Physics}},
  \href{http://dx.doi.org/10.1007/s00220-006-0034-0}{vol. 266},
  \href{http://dx.doi.org/10.1007/s00220-006-0034-0}{no.~1},
  \href{http://dx.doi.org/10.1007/s00220-006-0034-0}{pp. 37--63},
  \href{http://dx.doi.org/10.1007/s00220-006-0034-0}{May 2006}.

\bibitem{nakiboglu19B}
B.~{Nakibo{\u{g}}lu}, ``{The Sphere Packing Bound via {A}ugustin's Method},''
  \href{http://dx.doi.org/10.1109/TIT.2018.2882547}{\emph{IEEE Transactions on
  Information Theory}},
  \href{http://dx.doi.org/10.1109/TIT.2018.2882547}{vol.~65},
  \href{http://dx.doi.org/10.1109/TIT.2018.2882547}{no.~2},
  \href{http://dx.doi.org/10.1109/TIT.2018.2882547}{pp. 816--840},
  \href{http://dx.doi.org/10.1109/TIT.2018.2882547}{Feb 2019},
  (\href{http://arxiv.org/abs/1611.06924}{arXiv:1611.06924} [cs.IT]).

\bibitem{Cheng_simple}
H.-C. Cheng, ``Simple and tighter derivation of achievability for classical
  communication over quantum channels,''
  \href{http://dx.doi.org/10.1103/PRXQuantum.4.040330}{\emph{PRX Quantum}},
  \href{http://dx.doi.org/10.1103/PRXQuantum.4.040330}{vol.~4},
  \href{http://dx.doi.org/10.1103/PRXQuantum.4.040330}{no.~4},
  \href{http://dx.doi.org/10.1103/PRXQuantum.4.040330}{November 2023}.

\bibitem{AJW19a}
A.~Anshu, R.~Jain, and N.~A. Warsi, ``Building blocks for communication over
  noisy quantum networks,''
  \href{http://dx.doi.org/10.1109/TIT.2018.2851297}{\emph{IEEE Transactions on
  Information Theory}},
  \href{http://dx.doi.org/10.1109/TIT.2018.2851297}{vol.~65},
  \href{http://dx.doi.org/10.1109/TIT.2018.2851297}{no.~2},
  \href{http://dx.doi.org/10.1109/TIT.2018.2851297}{pp. 1287--1306},
  \href{http://dx.doi.org/10.1109/TIT.2018.2851297}{Feb. 2019}.

\bibitem{Wol59}
J.~Wolfowitz, ``Strong converse of the coding theorem for semicontinuous
  channels,'' \href{http://dx.doi.org/10.1215/ijm/1255455455}{\emph{Illinois
  Journal of Mathematics}},
  \href{http://dx.doi.org/10.1215/ijm/1255455455}{vol.~3},
  \href{http://dx.doi.org/10.1215/ijm/1255455455}{no.~4},
  \href{http://dx.doi.org/10.1215/ijm/1255455455}{Dec. 1959}.

\bibitem{Wol63}
------, ``On channels without a capacity,''
  \href{http://dx.doi.org/10.1016/s0019-9958(63)90101-3}{\emph{Information and
  Control}}, \href{http://dx.doi.org/10.1016/s0019-9958(63)90101-3}{vol.~6},
  \href{http://dx.doi.org/10.1016/s0019-9958(63)90101-3}{no.~1},
  \href{http://dx.doi.org/10.1016/s0019-9958(63)90101-3}{pp. 49--54},
  \href{http://dx.doi.org/10.1016/s0019-9958(63)90101-3}{Mar. 1963}.

\bibitem{Wol64}
------, \emph{Coding Theorems of Information Theory}.\hskip 1em plus 0.5em
  minus 0.4em\relax Springer Berlin Heidelberg, 1964.

\bibitem{Wol68}
------, ``Notes on a general strong converse,''
  \href{http://dx.doi.org/10.1016/s0019-9958(68)90112-5}{\emph{Information and
  Control}}, \href{http://dx.doi.org/10.1016/s0019-9958(68)90112-5}{vol.~12},
  \href{http://dx.doi.org/10.1016/s0019-9958(68)90112-5}{no.~1},
  \href{http://dx.doi.org/10.1016/s0019-9958(68)90112-5}{pp. 1--4},
  \href{http://dx.doi.org/10.1016/s0019-9958(68)90112-5}{Jan. 1968}.

\bibitem{kemperman69A}
J.~H.~B. Kemperman, ``On the optimum rate of transmitting information,'' in
  \emph{Probability and information theory}.\hskip 1em plus 0.5em minus
  0.4em\relax Springer, 1969, pp. 126--169.

\bibitem{kemperman69B}
------, ``On the optimum rate of transmitting information,''
  \href{http://dx.doi.org/10.1214/aoms/1177697293}{\emph{Ann. Math. Statist.}},
  \href{http://dx.doi.org/10.1214/aoms/1177697293}{vol.~40},
  \href{http://dx.doi.org/10.1214/aoms/1177697293}{no.~6},
  \href{http://dx.doi.org/10.1214/aoms/1177697293}{pp. 2156--2177},
  \href{http://dx.doi.org/10.1214/aoms/1177697293}{Dec 1969}.

\bibitem{HK89}
T.~S. Han and K.~Kobayashi, ``The strong converse theorem for hypothesis
  testing,'' \href{http://dx.doi.org/10.1109/18.42188}{\emph{IEEE Transactions
  on Information Theory}}, \href{http://dx.doi.org/10.1109/18.42188}{vol.~35},
  \href{http://dx.doi.org/10.1109/18.42188}{no.~1},
  \href{http://dx.doi.org/10.1109/18.42188}{pp. 178--180},
  \href{http://dx.doi.org/10.1109/18.42188}{1989}.

\bibitem{Kem71}
J.~H.~B. Kemperman, ``Strong converses for a general memoryless channel with
  feedback,'' in \emph{Proceedings 6th Prague Conference on Information Theory,
  Statistical Decision Functions, and Random Processes}, 1973, pp. 375--409.

\bibitem{WIK12}
D.~Wang, A.~Ingber, and Y.~Kochman, ``A strong converse for joint
  source-channel coding,'' in \emph{2012 IEEE International Symposium on
  Information Theory Proceedings}.\hskip 1em plus 0.5em minus 0.4em\relax IEEE,
  2012.

\bibitem{Ahl14}
R.~Ahlswede, \emph{Storing and Transmitting Data}, A.~Ahlswede,
  I.~Alth{\"o}fer, C.~Deppe, and U.~Tamm, Eds.\hskip 1em plus 0.5em minus
  0.4em\relax Springer International Publishing, 2014.

\bibitem{marton1966simple}
K.~Marton, ``A simple proof of the blowing-up lemma,'' \emph{IEEE Transactions
  on Information Theory}, vol.~24, pp. 857--866, 1966.

\bibitem{AD76}
R.~Ahlswede and G.~Dueck, ``Every bad code has a good subcode: A local converse
  to the coding theorem,''
  \href{http://dx.doi.org/10.1007/bf00535683}{\emph{Zeitschrift f{\"{u}}r
  Wahrscheinlichkeitstheorie und Verwandte Gebiete}},
  \href{http://dx.doi.org/10.1007/bf00535683}{vol.~34},
  \href{http://dx.doi.org/10.1007/bf00535683}{no.~2},
  \href{http://dx.doi.org/10.1007/bf00535683}{pp. 179--182},
  \href{http://dx.doi.org/10.1007/bf00535683}{1976}.

\bibitem{AGK76}
R.~Ahlswede, P.~G{\'a}cs, and J.~K{\"o}rner, ``Bounds on conditional
  probabilities with applications in multi-user communication,''
  \href{http://dx.doi.org/10.1007/bf00535682}{\emph{Zeitschrift f{\"u}r
  Wahrscheinlichkeitstheorie und Verwandte Gebiete}},
  \href{http://dx.doi.org/10.1007/bf00535682}{vol.~34},
  \href{http://dx.doi.org/10.1007/bf00535682}{no.~2},
  \href{http://dx.doi.org/10.1007/bf00535682}{pp. 157--177},
  \href{http://dx.doi.org/10.1007/bf00535682}{1976}.

\bibitem{CK11}
I.~Csisz{\'a}r and J.~K{\"o}rner, \emph{Information Theory: Coding Theorems for
  Discrete Memoryless Systems}.\hskip 1em plus 0.5em minus 0.4em\relax
  Cambridge University Press ({CUP}), 2011.

\bibitem{RS13}
M.~Raginsky and I.~Sason, ``Concentration of measure inequalities in
  information theory, communications, and coding,''
  \href{http://dx.doi.org/10.1561/0100000064}{\emph{Foundations and
  Trends{\textregistered} in Communications and Information Theory}},
  \href{http://dx.doi.org/10.1561/0100000064}{vol.~10},
  \href{http://dx.doi.org/10.1561/0100000064}{no. 1--2},
  \href{http://dx.doi.org/10.1561/0100000064}{pp. 1--247},
  \href{http://dx.doi.org/10.1561/0100000064}{2013}.

\bibitem{augustin66}
U.~Augustin, ``Ged{\"a}chtnisfreie {K}an{\"a}le f{\"u}r diskrete {Z}eit,''
  \href{http://dx.doi.org/10.1007/BF00531808}{\emph{Zeitschrift f{\"u}r
  {W}ahrscheinlichkeitstheorie und {V}erwandte {G}ebiete}},
  \href{http://dx.doi.org/10.1007/BF00531808}{vol.~6},
  \href{http://dx.doi.org/10.1007/BF00531808}{no.~1},
  \href{http://dx.doi.org/10.1007/BF00531808}{pp. 10--61},
  \href{http://dx.doi.org/10.1007/BF00531808}{Mar 1966}.

\bibitem{augustin78}
------, ``Noisy channels,'' Habilitation Thesis, Universit\"{a}t
  Erlangen-N\"{u}rnberg, 1978, (\url{http://bit.ly/3bsWDgG}).

\bibitem{Due81}
G.~Dueck, ``The strong converse to the coding theorem for the multiple-access
  channel,'' \emph{Journal of Combinatorics {\&} System Sciences}, vol.~6,
  no.~3, pp. 187--196, 1981.

\bibitem{Ahl82}
R.~Ahlswede, ``An elementary proof of the strong converse theorem for the
  multiple access channel,,'' \emph{Journal of combinatorics, information {\&}
  system sciences}.

\bibitem{FT16}
S.~L. Fong and V.~Y.~F. Tan, ``A proof of the strong converse theorem for
  gaussian multiple access channels,''
  \href{http://dx.doi.org/10.1109/tit.2016.2570243}{\emph{IEEE Transactions on
  Information Theory}},
  \href{http://dx.doi.org/10.1109/tit.2016.2570243}{vol.~62},
  \href{http://dx.doi.org/10.1109/tit.2016.2570243}{no.~8},
  \href{http://dx.doi.org/10.1109/tit.2016.2570243}{pp. 4376--4394},
  \href{http://dx.doi.org/10.1109/tit.2016.2570243}{2016}.

\bibitem{verduH94}
S.~Verd{\'u} and T.~S. Han, ``A general formula for channel capacity,''
  \href{http://dx.doi.org/10.1109/18.335960}{\emph{IEEE Transactions on
  Information Theory}}, \href{http://dx.doi.org/10.1109/18.335960}{vol.~40},
  \href{http://dx.doi.org/10.1109/18.335960}{no.~4},
  \href{http://dx.doi.org/10.1109/18.335960}{pp. 1147--1157},
  \href{http://dx.doi.org/10.1109/18.335960}{Jul 1994}.

\bibitem{Han98}
T.~S. Han and S.~Amari, ``Statistical inference under multiterminal data
  compression,'' \href{http://dx.doi.org/10.1109/18.720540}{\emph{{IEEE}
  Transactions on Information Theory}},
  \href{http://dx.doi.org/10.1109/18.720540}{vol.~44},
  \href{http://dx.doi.org/10.1109/18.720540}{no.~6},
  \href{http://dx.doi.org/10.1109/18.720540}{pp. 2300--2324},
  \href{http://dx.doi.org/10.1109/18.720540}{1998}.

\bibitem{Han03}
T.~S. Han, \emph{Information-Spectrum Methods in Information Theory}.\hskip 1em
  plus 0.5em minus 0.4em\relax Springer Berlin Heidelberg, 2003.

\bibitem{BS00}
S.~Boucheron and M.~Salamatian, ``About priority encoding transmission,''
  \href{http://dx.doi.org/10.1109/18.825846}{\emph{IEEE Transactions on
  Information Theory}}, \href{http://dx.doi.org/10.1109/18.825846}{vol.~46},
  \href{http://dx.doi.org/10.1109/18.825846}{no.~2},
  \href{http://dx.doi.org/10.1109/18.825846}{pp. 699--705},
  \href{http://dx.doi.org/10.1109/18.825846}{2000}.

\bibitem{Hay09b}
M.~Hayashi, ``Information spectrum approach to second-order coding rate in
  channel coding,''
  \href{http://dx.doi.org/10.1109/tit.2009.2030478}{\emph{IEEE Transactions on
  Information Theory}},
  \href{http://dx.doi.org/10.1109/tit.2009.2030478}{vol.~55},
  \href{http://dx.doi.org/10.1109/tit.2009.2030478}{no.~11},
  \href{http://dx.doi.org/10.1109/tit.2009.2030478}{pp. 4947--4966},
  \href{http://dx.doi.org/10.1109/tit.2009.2030478}{Nov 2009}.

\bibitem{WH14}
S.~Watanabe and M.~Hayashi, ``Strong converse and second-order asymptotics of
  channel resolvability,'' in \emph{2014 {IEEE} International Symposium on
  Information Theory}.\hskip 1em plus 0.5em minus 0.4em\relax {IEEE}, jun 2014.

\bibitem{TB15}
V.~Y.~F. Tan and M.~R. Bloch, ``Information spectrum approach to strong
  converse theorems for degraded wiretap channels,''
  \href{http://dx.doi.org/10.1109/tifs.2015.2434592}{\emph{IEEE Transactions on
  Information Forensics and Security}},
  \href{http://dx.doi.org/10.1109/tifs.2015.2434592}{vol.~10},
  \href{http://dx.doi.org/10.1109/tifs.2015.2434592}{no.~9},
  \href{http://dx.doi.org/10.1109/tifs.2015.2434592}{pp. 1891--1904},
  \href{http://dx.doi.org/10.1109/tifs.2015.2434592}{2015}.

\bibitem{OH94}
Y.~Oohama and T.~S. Han, ``Universal coding for the {Slepian-Wolf} data
  compression system and the strong converse theorem,''
  \href{http://dx.doi.org/10.1109/18.340465}{\emph{{IEEE} Transactions on
  Information Theory}}, \href{http://dx.doi.org/10.1109/18.340465}{vol.~40},
  \href{http://dx.doi.org/10.1109/18.340465}{no.~6},
  \href{http://dx.doi.org/10.1109/18.340465}{pp. 1908--1919},
  \href{http://dx.doi.org/10.1109/18.340465}{1994}.

\bibitem{Ooh15_IEICE}
Y.~OOHAMA, ``On two strong converse theorems for discrete memoryless
  channels,'' \href{http://dx.doi.org/10.1587/transfun.e98.a.2471}{\emph{IEICE
  Transactions on Fundamentals of Electronics, Communications and Computer
  Sciences}}, \href{http://dx.doi.org/10.1587/transfun.e98.a.2471}{vol. E98.A},
  \href{http://dx.doi.org/10.1587/transfun.e98.a.2471}{no.~12},
  \href{http://dx.doi.org/10.1587/transfun.e98.a.2471}{pp. 2471--2475},
  \href{http://dx.doi.org/10.1587/transfun.e98.a.2471}{2015}.

\bibitem{Ooh15a}
Y.~Oohama, ``Strong converse exponent for degraded broadcast channels at rates
  outside the capacity region,'' in \emph{2015 {IEEE} International Symposium
  on Information Theory ({ISIT})}.\hskip 1em plus 0.5em minus 0.4em\relax
  {IEEE}, Jun 2015, \href{http://dx.doi.org/10.1109/isit.2015.7282593}{pp.
  939--943}.

\bibitem{Ooh15b}
------, ``Strong converse theorems for degraded broadcast channels with
  feedback,'' in \emph{2015 {IEEE} International Symposium on Information
  Theory ({ISIT})}.\hskip 1em plus 0.5em minus 0.4em\relax {IEEE}, Jun 2015.

\bibitem{Ooh16}
Y.~{Oohama}, ``Exponent function for asymmetric broadcast channels at rates
  outside the capacity region,'' in \emph{2016 International Symposium on
  Information Theory and Its Applications (ISITA)}, Oct 2016, pp. 537--541.

\bibitem{oohama17A}
Y.~Oohama, ``The optimal exponent function for the additive white gaussian
  noise channel at rates above the capacity,'' in \emph{2017 IEEE International
  Symposium on Information Theory (ISIT)}, Aachen, Germany, June 2017,
  \href{http://dx.doi.org/10.1109/ISIT.2017.8006689}{pp.
  \href{http://arxiv.org/abs/1701.06\,357}{1053--1057}}.

\bibitem{Ooh18}
------, ``Exponential strong converse for source coding with side information
  at the decoder,'' \href{http://dx.doi.org/10.3390/e20050352}{\emph{Entropy}},
  \href{http://dx.doi.org/10.3390/e20050352}{vol.~20},
  \href{http://dx.doi.org/10.3390/e20050352}{no.~5},
  \href{http://dx.doi.org/10.3390/e20050352}{p. 352},
  \href{http://dx.doi.org/10.3390/e20050352}{May 2018}.

\bibitem{Ooh19}
------, ``Exponential strong converse for one helper source coding problem,''
  \href{http://dx.doi.org/10.3390/e21060567}{\emph{Entropy}},
  \href{http://dx.doi.org/10.3390/e21060567}{vol.~21},
  \href{http://dx.doi.org/10.3390/e21060567}{no.~6},
  \href{http://dx.doi.org/10.3390/e21060567}{p. 567},
  \href{http://dx.doi.org/10.3390/e21060567}{Jun 2019}.

\bibitem{Ooh19b}
------, ``An inequality useful for proofs of strong converse theorems in
  network information theory,'' in \emph{2019 IEEE International Symposium on
  Information Theory (ISIT)}.\hskip 1em plus 0.5em minus 0.4em\relax IEEE,
  2019.

\bibitem{Ooh20}
------, ``Strong converse for the state dependent channel,'' in \emph{2020 IEEE
  International Symposium on Information Theory (ISIT)}.\hskip 1em plus 0.5em
  minus 0.4em\relax IEEE, 2020.

\bibitem{SO19}
B.~Santoso and Y.~Oohama, ``Information theoretic security for shannon cipher
  system under side-channel attacks,''
  \href{http://dx.doi.org/10.3390/e21050469}{\emph{Entropy}},
  \href{http://dx.doi.org/10.3390/e21050469}{vol.~21},
  \href{http://dx.doi.org/10.3390/e21050469}{no.~5},
  \href{http://dx.doi.org/10.3390/e21050469}{p. 469},
  \href{http://dx.doi.org/10.3390/e21050469}{May 2019}.

\bibitem{OS22}
Y.~Oohama and B.~Santoso, ``A framework for shannon ciphers under side-channel
  attacks: a strong converse and more,'' in \emph{2022 IEEE International
  Symposium on Information Theory (ISIT)}.\hskip 1em plus 0.5em minus
  0.4em\relax IEEE, 2022.

\bibitem{PPV10}
Y.~Polyanskiy, H.~V. Poor, and S.~Verd{\'u}, ``Channel coding rate in the
  finite blocklength regime,''
  \href{http://dx.doi.org/10.1109/tit.2010.2043769}{\emph{{IEEE} Trans. Inform.
  Theory}}, \href{http://dx.doi.org/10.1109/tit.2010.2043769}{vol.~56},
  \href{http://dx.doi.org/10.1109/tit.2010.2043769}{no.~5},
  \href{http://dx.doi.org/10.1109/tit.2010.2043769}{pp. 2307--2359},
  \href{http://dx.doi.org/10.1109/tit.2010.2043769}{May 2010}.

\bibitem{Tan14}
V.~Y.~F. Tan, ``Asymptotic estimates in information theory with non-vanishing
  error probabilities,''
  \href{http://dx.doi.org/10.1561/0100000086}{\emph{Foundations and
  Trends{\textregistered} in Communications and Information Theory}},
  \href{http://dx.doi.org/10.1561/0100000086}{vol.~10},
  \href{http://dx.doi.org/10.1561/0100000086}{no.~4},
  \href{http://dx.doi.org/10.1561/0100000086}{pp. 1--184},
  \href{http://dx.doi.org/10.1561/0100000086}{2014}.

\bibitem{KV12}
\BIBentryALTinterwordspacing
V.~Kostina and S.~Verdu, ``Fixed-length lossy compression in the finite
  blocklength regime,''
  \href{http://dx.doi.org/10.1109/tit.2012.2186786}{\emph{IEEE Transactions on
  Information Theory}},
  \href{http://dx.doi.org/10.1109/tit.2012.2186786}{vol.~58},
  \href{http://dx.doi.org/10.1109/tit.2012.2186786}{no.~6},
  \href{http://dx.doi.org/10.1109/tit.2012.2186786}{pp. 3309--3338},
  \href{http://dx.doi.org/10.1109/tit.2012.2186786}{Jun 2012}. [Online].
  Available: \url{http://dx.doi.org/10.1109/tit.2012.2186786}
\BIBentrySTDinterwordspacing

\bibitem{KV13}
------, ``Lossy joint source-channel coding in the finite blocklength regime,''
  \href{http://dx.doi.org/10.1109/tit.2013.2238657}{\emph{{IEEE} Transactions
  on Information Theory}},
  \href{http://dx.doi.org/10.1109/tit.2013.2238657}{vol.~59},
  \href{http://dx.doi.org/10.1109/tit.2013.2238657}{no.~5},
  \href{http://dx.doi.org/10.1109/tit.2013.2238657}{pp. 2545--2575},
  \href{http://dx.doi.org/10.1109/tit.2013.2238657}{may 2013}.

\bibitem{HTW14}
M.~Hayashi, H.~Tyagi, and S.~Watanabe, ``Strong converse for a degraded wiretap
  channel via active hypothesis testing,'' in \emph{2014 52nd Annual Allerton
  Conference on Communication, Control, and Computing (Allerton)}.\hskip 1em
  plus 0.5em minus 0.4em\relax IEEE, September 2014.

\bibitem{VCF+16}
G.~Vazquez-Vilar, A.~T. Campo, A.~{Guill{\'e}n i F{\`a}bregas}, and
  A.~Martinez, ``Bayesian {$M$}-ary hypothesis testing: The meta-converse and
  {Verd{\'{u}}-Han} bounds are tight,''
  \href{http://dx.doi.org/10.1109/tit.2016.2542080}{\emph{IEEE Transactions on
  Information Theory}},
  \href{http://dx.doi.org/10.1109/tit.2016.2542080}{vol.~62},
  \href{http://dx.doi.org/10.1109/tit.2016.2542080}{no.~5},
  \href{http://dx.doi.org/10.1109/tit.2016.2542080}{pp. 2324--2333},
  \href{http://dx.doi.org/10.1109/tit.2016.2542080}{May 2016}.

\bibitem{FT17}
S.~L. Fong and V.~Y.~F. Tan, ``A proof of the strong converse theorem for
  gaussian broadcast channels via the gaussian poincar{\'e} inequality,''
  \href{http://dx.doi.org/10.1109/tit.2017.2753224}{\emph{IEEE Transactions on
  Information Theory}},
  \href{http://dx.doi.org/10.1109/tit.2017.2753224}{vol.~63},
  \href{http://dx.doi.org/10.1109/tit.2017.2753224}{no.~12},
  \href{http://dx.doi.org/10.1109/tit.2017.2753224}{pp. 7737--7746},
  \href{http://dx.doi.org/10.1109/tit.2017.2753224}{2017}.

\bibitem{Liu18a}
J.~Liu, ``Dispersion bound for the {Wyner-Ahlswede-K{\"o}rner} network via
  reverse hypercontractivity on types.''

\bibitem{LHV17}
J.~Liu, R.~van Handel, and S.~Verd{\'u}, ``Beyond the blowing-up lemma: Sharp
  converses via reverse hypercontractivity,'' in \emph{2017 {IEEE}
  International Symposium on Information Theory ({ISIT})}.\hskip 1em plus 0.5em
  minus 0.4em\relax {IEEE}, Jun 2017,
  \href{http://dx.doi.org/10.1109/isit.2017.8006667}{pp. 943--947}.

\bibitem{LHV18}
------, ``Second-order converses via reverse hypercontractivity,''
  arXiv:1812.10129 [cs.IT].

\bibitem{LCV20}
J.~Liu, T.~A. Courtade, P.~Cuff, and S.~Verd{\'u}, ``Smoothing brascamp--lieb
  inequalities and strong converses of coding theorems,''
  \href{http://dx.doi.org/10.1109/tit.2019.2953151}{\emph{IEEE Transactions on
  Information Theory}},
  \href{http://dx.doi.org/10.1109/tit.2019.2953151}{vol.~66},
  \href{http://dx.doi.org/10.1109/tit.2019.2953151}{no.~2},
  \href{http://dx.doi.org/10.1109/tit.2019.2953151}{pp. 704--721},
  \href{http://dx.doi.org/10.1109/tit.2019.2953151}{2020}.

\bibitem{GE09}
W.~Gu and M.~Effros, ``A strong converse for a collection of network source
  coding problems,'' in \emph{2009 IEEE International Symposium on Information
  Theory}.\hskip 1em plus 0.5em minus 0.4em\relax IEEE, 2009.

\bibitem{GE11}
------, ``A strong converse in source coding for super-source networks,'' in
  \emph{2011 IEEE International Symposium on Information Theory
  Proceedings}.\hskip 1em plus 0.5em minus 0.4em\relax IEEE, 2011.

\bibitem{Wat17}
S.~Watanabe, ``A converse bound on {Wyner--Ahlswede--K{\"{o}}rner} network via
  {Gray-Wyner} network,'' in \emph{2017 IEEE Information Theory Workshop
  (ITW)}.\hskip 1em plus 0.5em minus 0.4em\relax IEEE, 2017.

\bibitem{TW18}
H.~Tyagi and S.~Watanabe, ``Strong converse using change of measure
  arguments,'' \href{http://dx.doi.org/10.1109/tit.2019.2953877}{\emph{{IEEE}
  Transactions on Information Theory}},
  \href{http://dx.doi.org/10.1109/tit.2019.2953877}{vol.~66},
  \href{http://dx.doi.org/10.1109/tit.2019.2953877}{no.~2},
  \href{http://dx.doi.org/10.1109/tit.2019.2953877}{pp. 689--703},
  \href{http://dx.doi.org/10.1109/tit.2019.2953877}{Feb 2020}.

\bibitem{TW23}
D.~Takeuchi and S.~Watanabe, ``Tight exponential strong converse for source
  coding problem with encoded side information,'' in \emph{2023 IEEE
  International Symposium on Information Theory (ISIT)}.\hskip 1em plus 0.5em
  minus 0.4em\relax IEEE, 2023.

\bibitem{Ari73}
S.~Arimoto, ``On the converse to the coding theorem for discrete memoryless
  channels (corresp.),''
  \href{http://dx.doi.org/10.1109/tit.1973.1055007}{\emph{IEEE Transactions on
  Information Theory}},
  \href{http://dx.doi.org/10.1109/tit.1973.1055007}{vol.~19},
  \href{http://dx.doi.org/10.1109/tit.1973.1055007}{no.~3},
  \href{http://dx.doi.org/10.1109/tit.1973.1055007}{pp. 357--359},
  \href{http://dx.doi.org/10.1109/tit.1973.1055007}{May 1973}.

\bibitem{omura75}
J.~K. Omura, ``A lower bounding method for channel and source coding
  probabilities,''
  \href{http://dx.doi.org/10.1016/S0019-9958(75)90120-5}{\emph{Information and
  Control}}, \href{http://dx.doi.org/10.1016/S0019-9958(75)90120-5}{vol.~27},
  \href{http://dx.doi.org/10.1016/S0019-9958(75)90120-5}{no.~2},
  \href{http://dx.doi.org/10.1016/S0019-9958(75)90120-5}{pp. 148 -- 177},
  \href{http://dx.doi.org/10.1016/S0019-9958(75)90120-5}{1975}.

\bibitem{dueckK79}
G.~Dueck and J.~K{\"o}rner, ``Reliability function of a discrete memoryless
  channel at rates above capacity (corresp.),''
  \href{http://dx.doi.org/10.1109/TIT.1979.1056003}{\emph{IEEE Transactions on
  Information Theory}},
  \href{http://dx.doi.org/10.1109/TIT.1979.1056003}{vol.~25},
  \href{http://dx.doi.org/10.1109/TIT.1979.1056003}{no.~1},
  \href{http://dx.doi.org/10.1109/TIT.1979.1056003}{pp. 82--85},
  \href{http://dx.doi.org/10.1109/TIT.1979.1056003}{Jan 1979}.

\bibitem{sheverdyaev82}
\BIBentryALTinterwordspacing
A.~Y. Sheverdyaev, ``Lower bound for error probability in a discrete memoryless
  channel with feedback,'' \emph{Problems of Information Transmission},
  vol.~18, no.~4, pp. 5--15, 1982. [Online]. Available:
  \url{https://www.mathnet.ru/eng/ppi1243}
\BIBentrySTDinterwordspacing

\bibitem{PV10a}
Y.~Polyanskiy and S.~Verd{\'u}, ``Arimoto channel coding converse and
  {R{\'e}nyi} divergence,'' in \emph{2010 48th Annual Allerton Conference on
  Communication, Control, and Computing (Allerton)}.\hskip 1em plus 0.5em minus
  0.4em\relax {IEEE}, Sep 2010.

\bibitem{Str62}
V.~Strassen, ``Asymptotische absch{\"{a}}tzungen in {Shannon's}
  informationstheorie,'' \emph{Transactions of the Third Prague Conference on
  Information Theory}, pp. 689--723, 1962.

\bibitem{csiszarL71}
I.~Csisz{\'a}r and G.~Longo, ``On the error exponent for source coding and for
  testing simple statistical hypotheses,'' \emph{Studia Scientiarum
  Mathematicarum Hungarica}, vol.~6, pp. 181--191, 1971,
  {(\href{http://real-j.mtak.hu/id/eprint/5457}{http://real-j.mtak.hu/id/eprint/5457})}.

\bibitem{vazquezFKL18}
G.~Vazquez-Vilar, A.~G.~i. Fabregas, T.~Koch, and A.~Lancho, ``Saddlepoint
  approximation of the error probability of binary hypothesis testing,'' in
  \emph{2018 IEEE International Symposium on Information Theory (ISIT)}, June
  2018, \href{http://dx.doi.org/10.1109/ISIT.2018.8437503}{pp. 2306--2310}.

\bibitem{Hel67}
C.~W. Helstrom, ``Detection theory and quantum mechanics,''
  \href{http://dx.doi.org/10.1016/s0019-9958(67)90302-6}{\emph{Information and
  Control}}, \href{http://dx.doi.org/10.1016/s0019-9958(67)90302-6}{vol.~10},
  \href{http://dx.doi.org/10.1016/s0019-9958(67)90302-6}{no.~3},
  \href{http://dx.doi.org/10.1016/s0019-9958(67)90302-6}{pp. 254--291},
  \href{http://dx.doi.org/10.1016/s0019-9958(67)90302-6}{mar 1967}.

\bibitem{Hol72}
A.~Holevo, ``The analogue of statistical decision theory in the noncommutative
  probability theory,'' \emph{Proc. Moscow Math. Soc.}, vol.~26, pp. 133--149,
  1972.

\bibitem{Hel76}
\BIBentryALTinterwordspacing
C.~W. Helstrom, \emph{Quantum Detection and Estimation Theory}.\hskip 1em plus
  0.5em minus 0.4em\relax New York: Academic Press, 1976. [Online]. Available:
  \url{https://www.sciencedirect.com/science/bookseries/00765392/123}
\BIBentrySTDinterwordspacing

\bibitem{Wat18}
J.~Watrous, \emph{The Theory of Quantum Information}.\hskip 1em plus 0.5em
  minus 0.4em\relax Cambridge University Press, apr 2018.

\bibitem{Hol73b}
A.~Holevo, ``Bounds for the quantity of information transmitted by a quantum
  communication channel,'' \emph{Problems of Information Transmission}, vol.~9,
  no.~3, pp. 177–-- 183, 1973.

\bibitem{SW97}
B.~Schumacher and M.~D. Westmoreland, ``Sending classical information via noisy
  quantum channels,''
  \href{http://dx.doi.org/10.1103/physreva.56.131}{\emph{Physical Review A}},
  \href{http://dx.doi.org/10.1103/physreva.56.131}{vol.~56},
  \href{http://dx.doi.org/10.1103/physreva.56.131}{no.~1},
  \href{http://dx.doi.org/10.1103/physreva.56.131}{pp. 131--138},
  \href{http://dx.doi.org/10.1103/physreva.56.131}{Jul 1997}.

\bibitem{Hol98}
A.~Holevo, ``The capacity of the quantum channel with general signal states,''
  \href{http://dx.doi.org/10.1109/18.651037}{\emph{{IEEE} Transaction on
  Information Theory}}, \href{http://dx.doi.org/10.1109/18.651037}{vol.~44},
  \href{http://dx.doi.org/10.1109/18.651037}{no.~1},
  \href{http://dx.doi.org/10.1109/18.651037}{pp. 269--273},
  \href{http://dx.doi.org/10.1109/18.651037}{1998}.

\bibitem{Win99b}
A.~Winter, ``Coding theorem and strong converse for quantum channels,''
  \href{http://dx.doi.org/10.1109/18.796385}{\emph{{IEEE} Transaction on
  Information Theory}}, \href{http://dx.doi.org/10.1109/18.796385}{vol.~45},
  \href{http://dx.doi.org/10.1109/18.796385}{no.~7},
  \href{http://dx.doi.org/10.1109/18.796385}{pp. 2481--2485},
  \href{http://dx.doi.org/10.1109/18.796385}{1999}.

\bibitem{Win99}
------, ``Coding theorems of quantum information theory,'' \emph{Ph.D.~Thesis,
  Department of Mathematics, Universit{\"{a}}t Bielefeld), quant-ph/9907077},
  1999.

\bibitem{Sch95}
B.~Schumacher, ``Quantum coding,''
  \href{http://dx.doi.org/10.1103/physreva.51.2738}{\emph{Physical Review A}},
  \href{http://dx.doi.org/10.1103/physreva.51.2738}{vol.~51},
  \href{http://dx.doi.org/10.1103/physreva.51.2738}{no.~4},
  \href{http://dx.doi.org/10.1103/physreva.51.2738}{pp. 2738--2747},
  \href{http://dx.doi.org/10.1103/physreva.51.2738}{Apr 1995}.

\bibitem{JS94}
R.~Jozsa and B.~Schumacher, ``A new proof of the quantum noiseless coding
  theorem,'' \href{http://dx.doi.org/10.1080/09500349414552191}{\emph{Journal
  of Modern Optics}},
  \href{http://dx.doi.org/10.1080/09500349414552191}{vol.~41},
  \href{http://dx.doi.org/10.1080/09500349414552191}{no.~12},
  \href{http://dx.doi.org/10.1080/09500349414552191}{pp. 2343--2349},
  \href{http://dx.doi.org/10.1080/09500349414552191}{dec 1994}.

\bibitem{BFJ+96}
H.~Barnum, C.~A. Fuchs, R.~Jozsa, and B.~Schumacher, ``General fidelity limit
  for quantum channels,''
  \href{http://dx.doi.org/10.1103/physreva.54.4707}{\emph{Physical Review A}},
  \href{http://dx.doi.org/10.1103/physreva.54.4707}{vol.~54},
  \href{http://dx.doi.org/10.1103/physreva.54.4707}{no.~6},
  \href{http://dx.doi.org/10.1103/physreva.54.4707}{pp. 4707--4711},
  \href{http://dx.doi.org/10.1103/physreva.54.4707}{1996}.

\bibitem{BCF+01}
H.~Barnum, C.~M. Caves, C.~A. Fuchs, R.~Jozsa, and B.~Schumacher, ``On quantum
  coding for ensembles of mixed states,''
  \href{http://dx.doi.org/10.1088/0305-4470/34/35/304}{\emph{Journal of Physics
  A: Mathematical and General}},
  \href{http://dx.doi.org/10.1088/0305-4470/34/35/304}{vol.~34},
  \href{http://dx.doi.org/10.1088/0305-4470/34/35/304}{no.~35},
  \href{http://dx.doi.org/10.1088/0305-4470/34/35/304}{pp. 6767--6785},
  \href{http://dx.doi.org/10.1088/0305-4470/34/35/304}{2001}.

\bibitem{NK01}
M.~A. Nielsen and J.~Kempe, ``Separable states are more disordered globally
  than locally,''
  \href{http://dx.doi.org/10.1103/PhysRevLett.86.5184}{\emph{Phys. Rev.
  Lett.}}, \href{http://dx.doi.org/10.1103/PhysRevLett.86.5184}{vol.~86},
  \href{http://dx.doi.org/10.1103/PhysRevLett.86.5184}{pp. 5184--5187},
  \href{http://dx.doi.org/10.1103/PhysRevLett.86.5184}{May 2001}.

\bibitem{Hay02b}
M.~Hayashi, ``Exponents of quantum fixed-length pure-state source coding,''
  \href{http://dx.doi.org/10.1103/physreva.66.032321}{\emph{Physical Review
  A}}, \href{http://dx.doi.org/10.1103/physreva.66.032321}{vol.~66},
  \href{http://dx.doi.org/10.1103/physreva.66.032321}{no.~3},
  \href{http://dx.doi.org/10.1103/physreva.66.032321}{2002}.

\bibitem{HN03}
M.~Hayashi and H.~Nagaoka, ``General formulas for capacity of classical-quantum
  channels,'' \href{http://dx.doi.org/10.1109/tit.2003.813556}{\emph{{IEEE}
  Transaction on Information Theory}},
  \href{http://dx.doi.org/10.1109/tit.2003.813556}{vol.~49},
  \href{http://dx.doi.org/10.1109/tit.2003.813556}{no.~7},
  \href{http://dx.doi.org/10.1109/tit.2003.813556}{pp. 1753--1768},
  \href{http://dx.doi.org/10.1109/tit.2003.813556}{Jul 2003}.

\bibitem{VH94}
S.~Verd{\'u} and T.~S. Han, ``A general formula for channel capacity,''
  \href{http://dx.doi.org/10.1109/18.335960}{\emph{IEEE Transactions on
  Information Theory}}, \href{http://dx.doi.org/10.1109/18.335960}{vol.~40},
  \href{http://dx.doi.org/10.1109/18.335960}{no.~4},
  \href{http://dx.doi.org/10.1109/18.335960}{pp. 1147--1157},
  \href{http://dx.doi.org/10.1109/18.335960}{1994}.

\bibitem{AC05}
R.~Ahlswede and N.~Cai, ``A strong converse theorem for quantum multiple access
  channels,''
  \href{http://dx.doi.org/10.1016/j.endm.2005.07.011}{\emph{Electronic Notes in
  Discrete Mathematics}},
  \href{http://dx.doi.org/10.1016/j.endm.2005.07.011}{vol.~21},
  \href{http://dx.doi.org/10.1016/j.endm.2005.07.011}{pp. 137--141},
  \href{http://dx.doi.org/10.1016/j.endm.2005.07.011}{2005}.

\bibitem{Win02}
A.~Winter, ``Compression of sources of probability distributions and density
  operators.''

\bibitem{AW02}
R.~Ahlswede and A.~Winter, ``Strong converse for identification via quantum
  channels,'' \href{http://dx.doi.org/10.1109/18.985947}{\emph{{IEEE}
  Transactions on Information Theory}},
  \href{http://dx.doi.org/10.1109/18.985947}{vol.~48},
  \href{http://dx.doi.org/10.1109/18.985947}{no.~3},
  \href{http://dx.doi.org/10.1109/18.985947}{pp. 569--579},
  \href{http://dx.doi.org/10.1109/18.985947}{mar 2002}.

\bibitem{6757002}
C.~H. {Bennett}, I.~{Devetak}, A.~W. {Harrow}, P.~W. {Shor}, and A.~{Winter},
  ``The quantum reverse shannon theorem and resource tradeoffs for simulating
  quantum channels,''
  \href{http://dx.doi.org/10.1109/TIT.2014.2309968}{\emph{IEEE Transactions on
  Information Theory}},
  \href{http://dx.doi.org/10.1109/TIT.2014.2309968}{vol.~60},
  \href{http://dx.doi.org/10.1109/TIT.2014.2309968}{no.~5},
  \href{http://dx.doi.org/10.1109/TIT.2014.2309968}{pp. 2926--2959},
  \href{http://dx.doi.org/10.1109/TIT.2014.2309968}{May 2014}.

\bibitem{BCR11}
M.~Berta, M.~Christandl, and R.~Renner, ``The quantum reverse {Shannon} theorem
  based on one-shot information theory,''
  \href{http://dx.doi.org/10.1007/s00220-011-1309-7}{\emph{Communications in
  Mathematical Physics}},
  \href{http://dx.doi.org/10.1007/s00220-011-1309-7}{vol. 306},
  \href{http://dx.doi.org/10.1007/s00220-011-1309-7}{no.~3},
  \href{http://dx.doi.org/10.1007/s00220-011-1309-7}{pp. 579--615},
  \href{http://dx.doi.org/10.1007/s00220-011-1309-7}{aug 2011}.

\bibitem{BBC+13}
M.~Berta, F.~G. S.~L. Brandao, M.~Christandl, and S.~Wehner, ``Entanglement
  cost of quantum channels,''
  \href{http://dx.doi.org/10.1109/tit.2013.2268533}{\emph{IEEE Transactions on
  Information Theory}},
  \href{http://dx.doi.org/10.1109/tit.2013.2268533}{vol.~59},
  \href{http://dx.doi.org/10.1109/tit.2013.2268533}{no.~10},
  \href{http://dx.doi.org/10.1109/tit.2013.2268533}{pp. 6779--6795},
  \href{http://dx.doi.org/10.1109/tit.2013.2268533}{2013}.

\bibitem{tom-thesis}
M.~Tomamichel, ``A {Framework} for {Non}-{Asymptotic} {Quantum} {Information}
  {Theory},'' Mar. 2012, phD thesis (ETH Zurich).

\bibitem{TH13}
M.~Tomamichel and M.~Hayashi, ``A {Hierarchy} of {Information} {Quantities} for
  {Finite} {Block} {Length} {Analysis} of {Quantum} {Tasks},''
  \href{http://dx.doi.org/10.1109/TIT.2013.2276628}{\emph{IEEE Transactions on
  Information Theory}},
  \href{http://dx.doi.org/10.1109/TIT.2013.2276628}{vol.~59},
  \href{http://dx.doi.org/10.1109/TIT.2013.2276628}{no.~11},
  \href{http://dx.doi.org/10.1109/TIT.2013.2276628}{pp. 7693--7710},
  \href{http://dx.doi.org/10.1109/TIT.2013.2276628}{Nov. 2013}, 00112 arXiv:
  1208.1478.

\bibitem{TCR09}
M.~Tomamichel, R.~Colbeck, and R.~Renner, ``A fully quantum asymptotic
  equipartition property,''
  \href{http://dx.doi.org/10.1109/tit.2009.2032797}{\emph{IEEE Transactions on
  Information Theory}},
  \href{http://dx.doi.org/10.1109/tit.2009.2032797}{vol.~55},
  \href{http://dx.doi.org/10.1109/tit.2009.2032797}{no.~12},
  \href{http://dx.doi.org/10.1109/tit.2009.2032797}{pp. 5840--5847},
  \href{http://dx.doi.org/10.1109/tit.2009.2032797}{2009}.

\bibitem{TCR10}
------, ``Duality between smooth min- and max-entropies,''
  \href{http://dx.doi.org/10.1109/tit.2010.2054130}{\emph{{IEEE} Transactions
  on Information Theory}},
  \href{http://dx.doi.org/10.1109/tit.2010.2054130}{vol.~56},
  \href{http://dx.doi.org/10.1109/tit.2010.2054130}{no.~9},
  \href{http://dx.doi.org/10.1109/tit.2010.2054130}{pp. 4674--4681},
  \href{http://dx.doi.org/10.1109/tit.2010.2054130}{sep 2010}.

\bibitem{DBW+14}
F.~Dupuis, M.~Berta, J.~Wullschleger, and R.~Renner, ``One-shot decoupling,''
  \href{http://dx.doi.org/10.1007/s00220-014-1990-4}{\emph{Communications in
  Mathematical Physics}},
  \href{http://dx.doi.org/10.1007/s00220-014-1990-4}{vol. 328},
  \href{http://dx.doi.org/10.1007/s00220-014-1990-4}{no.~1},
  \href{http://dx.doi.org/10.1007/s00220-014-1990-4}{pp. 251--284},
  \href{http://dx.doi.org/10.1007/s00220-014-1990-4}{May 2014}.

\bibitem{DMH+13}
N.~Datta, M.~Mosonyi, M.-H. Hsieh, and F.~G. S.~L. Brandao, ``A smooth entropy
  approach to quantum hypothesis testing and the classical capacity of quantum
  channels,'' \href{http://dx.doi.org/10.1109/tit.2013.2282160}{\emph{IEEE
  Transactions on Information Theory}},
  \href{http://dx.doi.org/10.1109/tit.2013.2282160}{vol.~59},
  \href{http://dx.doi.org/10.1109/tit.2013.2282160}{no.~12},
  \href{http://dx.doi.org/10.1109/tit.2013.2282160}{pp. 8014--8026},
  \href{http://dx.doi.org/10.1109/tit.2013.2282160}{2013}.

\bibitem{MW14_pretty_strong}
C.~Morgan and A.~Winter, ``"pretty strong" converse for the quantum capacity of
  degradable channels,''
  \href{http://dx.doi.org/10.1109/tit.2013.2288971}{\emph{IEEE Transactions on
  Information Theory}},
  \href{http://dx.doi.org/10.1109/tit.2013.2288971}{vol.~60},
  \href{http://dx.doi.org/10.1109/tit.2013.2288971}{no.~1},
  \href{http://dx.doi.org/10.1109/tit.2013.2288971}{pp. 317--333},
  \href{http://dx.doi.org/10.1109/tit.2013.2288971}{2014}.

\bibitem{Win16}
A.~Winter, ``"pretty strong" converse for the private capacity of degraded
  quantum wiretap channels,'' in \emph{2016 IEEE International Symposium on
  Information Theory (ISIT)}.\hskip 1em plus 0.5em minus 0.4em\relax IEEE,
  2016.

\bibitem{MW14}
W.~Matthews and S.~Wehner, ``Finite blocklength converse bounds for quantum
  channels,'' \href{http://dx.doi.org/10.1109/tit.2014.2353614}{\emph{IEEE
  Transactions on Information Theory}},
  \href{http://dx.doi.org/10.1109/tit.2014.2353614}{vol.~60},
  \href{http://dx.doi.org/10.1109/tit.2014.2353614}{no.~11},
  \href{http://dx.doi.org/10.1109/tit.2014.2353614}{pp. 7317--7329},
  \href{http://dx.doi.org/10.1109/tit.2014.2353614}{Nov 2014}.

\bibitem{WR13}
L.~{Wang} and R.~{Renner}, ``{One-Shot Classical-Quantum Capacity and
  Hypothesis Testing},''
  \href{http://dx.doi.org/10.1103/PhysRevLett.108.200501}{\emph{Physical Review
  Letters}}, \href{http://dx.doi.org/10.1103/PhysRevLett.108.200501}{vol. 108},
  \href{http://dx.doi.org/10.1103/PhysRevLett.108.200501}{no.~20},
  \href{http://dx.doi.org/10.1103/PhysRevLett.108.200501}{p. 200501},
  \href{http://dx.doi.org/10.1103/PhysRevLett.108.200501}{May 2012}.

\bibitem{Li14}
K.~Li, ``Second-order asymptotics for quantum hypothesis testing,''
  \href{http://dx.doi.org/10.1214/13-aos1185}{\emph{The Annals of Statistics}},
  \href{http://dx.doi.org/10.1214/13-aos1185}{vol.~42},
  \href{http://dx.doi.org/10.1214/13-aos1185}{no.~1},
  \href{http://dx.doi.org/10.1214/13-aos1185}{pp. 171--189},
  \href{http://dx.doi.org/10.1214/13-aos1185}{Feb 2014}.

\bibitem{TV15}
M.~Tomamichel and V.~Y.~F. Tan, ``Second-order asymptotics for the classical
  capacity of image-additive quantum channels,''
  \href{http://dx.doi.org/10.1007/s00220-015-2382-0}{\emph{Communications in
  Mathematical Physics}},
  \href{http://dx.doi.org/10.1007/s00220-015-2382-0}{vol. 338},
  \href{http://dx.doi.org/10.1007/s00220-015-2382-0}{no.~1},
  \href{http://dx.doi.org/10.1007/s00220-015-2382-0}{pp. 103--137},
  \href{http://dx.doi.org/10.1007/s00220-015-2382-0}{May 2015}.

\bibitem{WRG15}
M.~M. Wilde, J.~M. Renes, and S.~Guha, ``Second-order coding rates for
  pure-loss bosonic channels,''
  \href{http://dx.doi.org/10.1007/s11128-015-0997-x}{\emph{Quantum Information
  Processing}}, \href{http://dx.doi.org/10.1007/s11128-015-0997-x}{vol.~15},
  \href{http://dx.doi.org/10.1007/s11128-015-0997-x}{no.~3},
  \href{http://dx.doi.org/10.1007/s11128-015-0997-x}{p. 1289–1308},
  \href{http://dx.doi.org/10.1007/s11128-015-0997-x}{2015}.

\bibitem{DTW16}
N.~Datta, M.~Tomamichel, and M.~M. Wilde, ``On the second-order asymptotics for
  entanglement-assisted communication,''
  \href{http://dx.doi.org/10.1007/s11128-016-1272-5}{\emph{Quantum Information
  Processing}}, \href{http://dx.doi.org/10.1007/s11128-016-1272-5}{vol.~15},
  \href{http://dx.doi.org/10.1007/s11128-016-1272-5}{no.~6},
  \href{http://dx.doi.org/10.1007/s11128-016-1272-5}{pp. 2569--2591},
  \href{http://dx.doi.org/10.1007/s11128-016-1272-5}{mar 2016}.

\bibitem{TBR16}
M.~Tomamichel, M.~Berta, and J.~M. Renes, ``Quantum coding with finite
  resources,'' \href{http://dx.doi.org/10.1038/ncomms11419}{\emph{Nature
  Communications}}, \href{http://dx.doi.org/10.1038/ncomms11419}{vol.~7},
  \href{http://dx.doi.org/10.1038/ncomms11419}{p. 11419},
  \href{http://dx.doi.org/10.1038/ncomms11419}{May 2016}.

\bibitem{CWY23}
H.-C. Cheng, A.~Winter, and N.~Yu, ``Discrimination of quantum states under
  locality constraints in the many-copy setting,''
  \href{http://dx.doi.org/10.1007/s00220-023-04836-0}{\emph{Communications in
  Mathematical Physics}},
  \href{http://dx.doi.org/10.1007/s00220-023-04836-0}{vol. 404},
  \href{http://dx.doi.org/10.1007/s00220-023-04836-0}{no.~1},
  \href{http://dx.doi.org/10.1007/s00220-023-04836-0}{pp. 151--183},
  \href{http://dx.doi.org/10.1007/s00220-023-04836-0}{2023}.

\bibitem{WXD18}
X.~Wang, W.~Xie, and R.~Duan, ``Semidefinite programming strong converse bounds
  for classical capacity,''
  \href{http://dx.doi.org/10.1109/tit.2017.2741101}{\emph{IEEE Transactions on
  Information Theory}},
  \href{http://dx.doi.org/10.1109/tit.2017.2741101}{vol.~64},
  \href{http://dx.doi.org/10.1109/tit.2017.2741101}{no.~1},
  \href{http://dx.doi.org/10.1109/tit.2017.2741101}{pp. 640--653},
  \href{http://dx.doi.org/10.1109/tit.2017.2741101}{2018}.

\bibitem{WKD19}
X.~Wang, K.~Fang, and R.~Duan, ``Semidefinite programming converse bounds for
  quantum communication,''
  \href{http://dx.doi.org/10.1109/tit.2018.2874031}{\emph{IEEE Transactions on
  Information Theory}},
  \href{http://dx.doi.org/10.1109/tit.2018.2874031}{vol.~65},
  \href{http://dx.doi.org/10.1109/tit.2018.2874031}{no.~4},
  \href{http://dx.doi.org/10.1109/tit.2018.2874031}{pp. 2583--2592},
  \href{http://dx.doi.org/10.1109/tit.2018.2874031}{2019}.

\bibitem{KL21}
K.~Li, Y.~Yao, and M.~Hayashi, ``Tight exponential analysis for smoothing the
  max-relative entropy and for quantum privacy amplification,''
  \href{http://dx.doi.org/10.1109/tit.2022.3217671}{\emph{{IEEE} Transactions
  on Information Theory}},
  \href{http://dx.doi.org/10.1109/tit.2022.3217671}{vol.~69},
  \href{http://dx.doi.org/10.1109/tit.2022.3217671}{no.~3},
  \href{http://dx.doi.org/10.1109/tit.2022.3217671}{pp. 1680--1694},
  \href{http://dx.doi.org/10.1109/tit.2022.3217671}{mar 2023}.

\bibitem{SD22}
N.~D. Robert~Salzmann, ``Total insecurity of communication via strong converse
  for quantum privacy amplification,'' 2022, arXiv:2202.11090 [quant-ph].

\bibitem{LY24}
K.~Li and Y.~Yao, ``Operational interpretation of the sandwiched {R{\'e}nyi}
  divergence of order {$1/2$} to {$1$} as strong converse exponents,''
  \href{http://dx.doi.org/10.1007/s00220-023-04890-8}{\emph{Communications in
  Mathematical Physics}},
  \href{http://dx.doi.org/10.1007/s00220-023-04890-8}{vol. 405},
  \href{http://dx.doi.org/10.1007/s00220-023-04890-8}{no.~2},
  \href{http://dx.doi.org/10.1007/s00220-023-04890-8}{Jan. 2024}.

\bibitem{CG22}
H.-C. Cheng and L.~Gao, ``Error exponent and strong converse for quantum soft
  covering,'' \emph{arXiv:2202.10995 [quant-ph]}, 2022.

\bibitem{SGC22a}
Y.-C. Shen, L.~Gao, and H.-C. Cheng, ``Strong converse for privacy
  amplification against quantum side information,'' \emph{arXiv:2202.10263
  [quant-ph]}, 2022.

\bibitem{SGC22b}
H.-C. Cheng and L.~Gao, ``Optimal second-order rates for quantum soft covering
  and privacy amplification,'' \emph{arXiv:2202.11590 [quant-ph]}, 2022.

\bibitem{SGC23}
Y.-C. Shen, L.~Gao, and H.-C. Cheng, ``Privacy amplification against quantum
  side information via regular random binning,'' in \emph{2023 59th Annual
  Allerton Conference on Communication, Control, and Computing
  (Allerton)}.\hskip 1em plus 0.5em minus 0.4em\relax IEEE, September 2023.

\bibitem{CG23}
\BIBentryALTinterwordspacing
H.-C. Cheng and L.~Gao, ``Tight one-shot analysis for convex splitting with
  applications in quantum information theory,'' 2023. [Online]. Available:
  \url{https://arxiv.org/abs/2304.12055}
\BIBentrySTDinterwordspacing

\bibitem{Bla74}
R.~E. Blahut, ``Hypothesis testing and information theory,''
  \href{http://dx.doi.org/10.1109/tit.1974.1055254}{\emph{{IEEE} Transaction on
  Information Theory}},
  \href{http://dx.doi.org/10.1109/tit.1974.1055254}{vol.~20},
  \href{http://dx.doi.org/10.1109/tit.1974.1055254}{no.~4},
  \href{http://dx.doi.org/10.1109/tit.1974.1055254}{pp. 405--417},
  \href{http://dx.doi.org/10.1109/tit.1974.1055254}{Jul 1974}.

\bibitem{haroutunian68}
E.~A. Haroutunian, ``Bounds for the exponent of the probability of error for a
  semicontinuous memoryless channel,'' \emph{Problems of Information
  Transmission}, vol.~4, no.~4, pp. 29--39, 1968.

\bibitem{Kos84}
H.~Kosaki, ``Applications of the complex interpolation method to a {von
  Neumann} algebra: Non-commutative {$L_p$}-spaces,''
  \href{http://dx.doi.org/10.1016/0022-1236(84)90025-9}{\emph{Journal of
  Functional Analysis}},
  \href{http://dx.doi.org/10.1016/0022-1236(84)90025-9}{vol.~56},
  \href{http://dx.doi.org/10.1016/0022-1236(84)90025-9}{no.~1},
  \href{http://dx.doi.org/10.1016/0022-1236(84)90025-9}{pp. 29--78},
  \href{http://dx.doi.org/10.1016/0022-1236(84)90025-9}{mar 1984}.

\bibitem{Bha97}
R.~Bhatia, \emph{Matrix Analysis}.\hskip 1em plus 0.5em minus 0.4em\relax
  Springer New York, 1997.

\bibitem{HP14}
F.~Hiai and D.~Petz, \emph{Introduction to Matrix Analysis and
  Applications}.\hskip 1em plus 0.5em minus 0.4em\relax Springer International
  Publishing, 2014.

\bibitem{TBH14}
M.~Tomamichel, M.~Berta, and M.~Hayashi, ``Relating different quantum
  generalizations of the conditional {R{\'{e}}nyi} entropy,''
  \href{http://dx.doi.org/10.1063/1.4892761}{\emph{Journal of Mathematical
  Physics}}, \href{http://dx.doi.org/10.1063/1.4892761}{vol.~55},
  \href{http://dx.doi.org/10.1063/1.4892761}{no.~8},
  \href{http://dx.doi.org/10.1063/1.4892761}{p. 082206},
  \href{http://dx.doi.org/10.1063/1.4892761}{Aug 2014}.

\bibitem{dixmier1953formes}
J.~Dixmier, ``Formes lin{\'e}aires sur un anneau d'op{\'e}rateurs,''
  \emph{Bulletin de la Soci{\'e}t{\'e} Math{\'e}matique de France}, vol.~81,
  pp. 9--39, 1953.

\end{thebibliography}
}

\end{document}